\DeclareMathOperator{\bin}{bin}
\DeclareMathOperator{\ind}{ind}
\DeclareMathOperator{\avg}{avg}
\DeclareMathOperator*{\dist}{dist}
\newtheorem{fact}{Fact}
\newtheorem{thm}{Theorem}
\newtheorem{proposition}{Proposition}
\newtheorem{lemma}{Lemma}
\newdefinition{example}{Example}
\DeclarePairedDelimiter\norm{\lVert}{\rVert}
\begin{document}

\begin{frontmatter}
\title{On the decomposition of stochastic cellular automata}  

\author[pan,gent]{Witold Bołt\corref{c1}}
\ead{witold.bolt@hope.art.pl}
\cortext[c1]{Corresponding author}

\author[gent]{Jan M.~Baetens}
\author[gent]{Bernard De~Baets}

\address[pan]{Systems Research Institute, Polish Academy of Sciences,\\Newelska St.\ 6, 01-447~Warsaw,~Poland}
\address[gent]{KERMIT, Department of Mathematical Modelling, Statistics and Bioinformatics,\\ Ghent University, Coupure links 653, B-9000 Gent, Belgium}

\begin{abstract}
In this paper we present two interesting properties of stochastic cellular automata that can be helpful in analyzing the dynamical behavior of such automata. The first property allows for calculating cell-wise probability distributions over the state set of a stochastic cellular automaton, {\it i.e.}\ images that show the average state of each cell during the evolution of the stochastic cellular automaton. The second property shows that stochastic cellular automata are equivalent to so-called stochastic mixtures of deterministic cellular automata. Based on this property, any stochastic cellular automaton can be decomposed into a set of deterministic cellular automata, each of which contributes to the behavior of the stochastic cellular automaton. 
\end{abstract}

\begin{keyword}
stochastic cellular automata \sep complexity analysis \sep continuous cellular automata \sep decomposition
\end{keyword}
\end{frontmatter}

\section{Introduction}
\label{sec:intro}
Cellular Automata (CAs) are often used for constructing models in a variety of fields of application, including chemistry, biology, medicine, physics, ecology and the study of socioeconomic interactions. In many of these settings, Stochastic CAs (SCAs) are considered due to the stochastic nature of the phenomenon under study or due to a lack of understanding of the exact rules driving the phenomenon \cite{Ganguly03asurvey,das2012survey}. Therefore, a better understanding of the dynamics of SCAs is crucial. Only a few methods for dealing with the analysis of models based on SCAs have been developed. In many practical applications, especially in cases where the averaged behavior of the system is of concern, sampling methods, relying on extensive computer simulations, are sufficient \cite{aids,2136045,4167259}. Techniques built on the mean--field theory can be used to study the long-term behavior of SCAs \cite{RSA:RSA20126}. The theory of Markov chains can be applied to provide analytical tools for analyzing the model's behavior~\cite{Mairesse201442}, although in practice, due to the theoretical and computational complexity of such tools, the application scope is limited. 

This paper is devoted to providing effective analytical tools based on deterministic CAs for the analysis of multi-state SCAs. Although the theoretical foundations of the presented results are already (at least partially) known in the literature \cite{Mairesse201442,lebowitz,computing-pca}, so far the applications are limited. Therefore, the main aim of this paper is to provide a complete, formal description of the discussed properties in a form that is suitable for applications and that does not require a strong mathematical background, as well as to present examples that can motivate further applications in the domain of systems modeling. The methods presented here are developed in the context of 1D SCAs on finite lattices, but can be easily generalized to the case of higher dimensions and infinite spaces.

Two main results are presented in this paper. The first one involves constructing images that show the cell-wise probability distribution over the state set, at any time step. The method is based on associating an SCA with a deterministic, Continuous CA (CCA). The second result shows the equivalence of SCAs and stochastic mixtures of deterministic CAs. Based on this finding, any SCA can be decomposed into a finite set of deterministic CAs, each of them contributing to the behavior of the stochastic system. An effective method for finding a decomposition is presented. It allows to uncover the deterministic component in the mixture with the highest impact on the behavior of the SCA.

This paper is organized as follows. We start with some preliminaries and definitions in Section~\ref{sec:preliminary}. In Section~\ref{sec:cca}, we introduce the concept of CCAs and the formalism enabling the analysis of multi-state CAs. Section \ref{sec:multi-sca} contains the definition of multi-state SCAs and holds the main results of this paper. The paper is concluded with Section \ref{sec:experiments}, discussing the experimental results that illustrate our results. A summary is presented in Section~\ref{sec:summary}.

\section{Preliminaries}
\label{sec:preliminary}
Informally, a CA is a discrete dynamical system in which the space is subdivided into discrete elements, referred to as cells. At every consecutive, discrete time step, each cell is assigned one of $N$ states using a deterministic rule, which depends only on the previous state of the considered cell and the states of its neighboring cells \cite{neumann1948}.

Formally speaking, let the state set $\mathcal{S}$ be a finite set of $N>1$ elements. Elements of the set $\mathcal{C} = \{ c_i \mid i=1,\dotsc,M \}$ will be referred to as cells. Every cell $c_i$ is assigned a state $s(c_i,t)\in \mathcal{S}$ at each time step $t\in\mathbb{N}$, according to a local, deterministic rule. The vector $s(\cdot, t) \in \mathcal{S}^M$ will be referred to as the configuration, and as the initial configuration when $t=0$. The sequence $(s(\cdot,0),s(\cdot,1),\ldots)$ will be referred to as the space-time diagram of the CA. For technical reasons, we impose periodic boundary conditions, but our results do not depend on this assumption.

The function $A\colon \mathcal{S}^M \to \mathcal{S}^M$ satisfying, for every $t\in\mathbb{N}$:
\begin{equation}
s(\cdot, t+1) = A(s(\cdot, t))\,, 
\end{equation}
will be referred to as a global CA rule or simply a CA, if there exists a radius $r\in\mathbb{N}$ and a function $f\colon\mathcal{S}^{2\,r+1}\to\mathcal{S}$ satisfying:
\begin{equation}
s(c_i, t+1) = f(s(c_{i-r}, t), \dotsc, s(c_{i+r},t))\,,
\label{eq:local-fun}
\end{equation}
for every $i$ and $t\in\mathbb{N}$. Such a function $f$ will be referred to as a local rule. Note that a local rule uniquely defines the global rule, while for a given global rule, multiple local rules exist. Additionally, it is assumed that $r\ll M$. The vector $(c_{i-r}, c_{i-r+1},\dotsc,c_{i+r-1},c_{i+r})$ will be referred to as the neighborhood of cell $c_i$ and $R=2\,r+1$ will denote the neighborhood size. For the sake of simplicity, $s(c_{i-r},\dotsc,c_{i+r}, t)\in\mathcal{S}^R$ will denote the state of the neighborhood of cell $c_i$ at time step $t$, and will be referred to as the neighborhood configuration of $c_i$ at time step $t$.

\section{Continuous CAs}
\label{sec:cca}

There exist multiple ways of extending the definition of CAs to cover infinite state sets. Examples of such approaches include Coupled Map Lattices (CMLs)~\cite{kaneko93} and so-called fuzzy CAs \cite{flocchini2000convergence,Betel:2009:ABF:1621142.1621283,Betel20095}. In this section, we present Continuous CAs (CCAs), which can be seen as a generalization of the ideas presented in \cite{flocchini2000convergence}. Our formalism is based on a polynomial representation of discrete CA rules. We start with formulating the continuous counterparts of binary CAs. After that we present a generalization to cover multi-state CAs. 

\subsection{Binary CAs}
\label{sec:binary-cca}

Binary CAs are widely studied \cite{RevModPhys.55.601,wolfram-class}, because they allow to evolve complex patterns and exhibit complex behavior despite their intrinsic simplicity. The state set of such a CA $A$ is $\mathcal{S} = \{0,1\}$. We will now formally define and characterize its local rule $f\colon \mathcal{S}^R\to\mathcal{S}$. Let $l = (l_i)_{i=1}^{2^R}$ be a binary vector. We consider a system of equations:
\begin{equation}
f(s_{i,1},\dotsc,s_{i,R}) = l_i\,,
\end{equation}
where $(s_{i,1},\dotsc,s_{i,R})$ is a binary vector such that $i = 1 + \sum_{j=0}^{R-1} s_{i,R-j}\,2^{j}$. As can be seen, such a system of equations is uniquely defined by the vector $l$. The vector $l$ will be referred to as the lookup table (LUT) of the local rule $f$. It is not difficult to check that such a system uniquely defines the function $f$, since it lists all of the possible input configurations, and maps them to corresponding values by components of the vector $l$.

Following \cite{flocchini2000convergence}, we know that the function $f$ can be expressed as a polynomial, which is of interest for our purposes. In order to define it, we introduce two auxiliary functions. We start with the function $\ind\colon \{1,\dotsc,2^R\}\to \{1,2\}^R$. It is defined in such a way that $\ind(i)[m]$ is the $m$--th digit, incremented by one, read from left to right, of the binary representation of the integer $i-1$, padded with ones on the left, so that it always has length $R$. Consequently, it holds that:
\begin{equation}
i = 1+\sum_{m=1}^{R} (\ind(i)[R-(m-1)]-1)\,2^{m-1}.
\label{eq:bin-def}
\end{equation}
The values of $\ind(i)$ for $R=3$ and $i\in\{1,\dotsc,8\}$ are shown below:
\begin{gather*}
\ind(1) = (1,1,1),\ \ind(2) = (1,1,2),\ \ind(3) = (1,2,1),\ \ind(4) = (1,2,2),\\
\ind(5) = (2,1,1),\ \ind(6) = (2,1,2),\ \ind(7) = (2,2,1),\ \ind(8) = (2,2,2).
\end{gather*}
The function $\ind$ is related to the binary representation of integers. In \cite{flocchini2000convergence} a simpler formulation using the function $\bin$, which yields the binary representation of an integer, is used. The construction presented here, although a bit more complicated in the binary setting, allows for a smoother generalization to the multi-state case. 

Using the function $\ind$, we now define the function $n\colon \mathcal{S} \times \mathbb{N} \times \mathbb{N} \to \mathcal{S}$, which for $s\in\mathcal{S}$ and $m,i\in\mathbb{N}$, is given by:
\[
n(s, m, i) =
\begin{cases}
s &, \textrm{if}\ \ind(i)[m] = 2,\\
1- s &, \textrm{if}\ \ind(i)[m] = 1.
\end{cases}
\]
Note that we will use vectors of states of the form $(s_1,\dotsc,s_R)\in\mathcal{S}^R$ and for simplicity, for any $m\in\{1,\dotsc,R\}$, we will write $n(s_m,i)$ instead of $n(s_m, m, i)$. Using the functions $\ind$ and $n$, we can write the polynomial representation of the local rule $f$ as:
\begin{equation}
f(s_1,\ldots,s_{R}) = \sum_{i=1}^{2^R} l_i \, \left(\prod_{m=1}^{R} n(s_m,i) \right)\,.
\label{eq:fca-gen}
\end{equation}
The following example shows the explicit form of this polynomial for a member of the family of Elementary CAs (ECAs).

\begin{example}[Elementary CAs]
\label{exa:one}
Binary, 1D CAs with neighborhood radius $r=1$ are commonly referred to as ECAs~\cite{RevModPhys.55.601}. There are 256 such ECAs. Treating the LUT entries $l_i$ as digits of an integer written in base 2, we can enumerate the local rules of ECAs. By convention, the binary vectors are read in the reverse order, {\it i.e.}\ $(l_i)_{i=1}^8$ is interpreted as $(l_8, l_7, \dotsc, l_1)_2$. For example, given the LUT $l = (0,1,1,0,1,0,0,1)$ of ECA 150, and denoting the Boolean complement as $\bar{s} = 1-s$, its local rule can be written, according to Eq.~$(\ref{eq:fca-gen})$, as:
\[ 
f_{150}(s_1,s_2,s_3) = \bar{s}_1\,\bar{s}_2\,s_3
+ \bar{s}_1\,s_2\,\bar{s}_3 
+ s_1\,\bar{s}_2\,\bar{s}_3
+ s_1\,s_2\,s_3\,.\tag*{\qed}\]
\end{example}

Using the above notation, a CCA can be defined analogously to a binary CA, with two notable differences. Firstly, the state set of a CCA is the unit interval, {\it i.e.}\ $\mathcal{S}=[0,1]$, and, secondly, the local rule $f\colon [0,1]^R\to[0,1]$ is given by Eq.~(\ref{eq:fca-gen}) with coefficients $l_i\in [0,1]$. We will refer to such a vector $(l_i)_{i=1}^{2^R}$ as a generalized~LUT. It is easy to check that this definition of a CCA is formally correct. Indeed, the values of the function $f$ in Eq.\ (\ref{eq:fca-gen}) are guaranteed to belong to the unit interval if $l_i\in[0,1]$ for all $i$ and $s_m\in[0,1]$ for all $m~\in~\{1,\dotsc,R\}$. Note that this construction is directly related to the one presented in~\cite{flocchini2000convergence}, where fuzzy CAs are constructed as polynomials representing fuzzified logical functions. Following the same line of reasoning, an alternative polynomial representation for the local rules of binary CAs is presented in \cite{schule2008global}, which is consistent with a logical representation of the local rules.  

In order to introduce the formalism that is needed in the multi-state setting, we present a slightly modified way of representing binary CAs compared to the one obtained through Eq.~$(\ref{eq:fca-gen})$. Let us assume that the state set is given by $\mathcal{S}_2 = \{ (1,0), (0,1) \}\subset \mathbb{R}^2$. Then the local rule is a function $f\colon \mathcal{S}_2^R\to \mathcal{S}_2$ and can be represented as a vector function $f = (f_1, f_2)$, where $f_j\colon \mathcal{S}_2^R\to \{0,1\}$. Note that any $s = (s_1, s_2)\in \mathcal{S}_2$ satisfies $s_2 = 1-s_1$. Similarly, $f_2 = 1 - f_1$, so that the local rule $f$ can be defined using $f_2$ only. \hl{Let $l\in\mathcal{S}_2^{2^R}$ be given, {\it i.e.} for every $i=1,\dotsc,2^R$, it holds that $l_i = (l_{i,1}, l_{i,2})\in \mathcal{S}_2$. Let $(s_1,\dotsc,s_R) \in \mathcal{S}_2^R$.} We can define $f_2$ with a formula similar to Eq.\ (\ref{eq:fca-gen}) as:
\begin{equation}
f_2(s_1,\dotsc,s_{R}) = \sum_{i=1}^{2^R} l_{i,2} \, \left(\prod_{m=1}^{R} n(s_{m,2},i) \right).
\end{equation}
Since $s_{m,1} = 1 - s_{m,2}$, we can rewrite the formula for the function $n$ as:
\[
n(s_{m,2}, i) =
\begin{cases}
s_{m,2} &, \textrm{if}\ \ind(i)[m] = 2,\\
s_{m,1} &, \textrm{if}\ \ind(i)[m] = 1,
\end{cases}
\]
which can be simplified to:
\begin{equation}
n(s_{m,2}, i) = s_{m,\ind(i)[m]}\,,
\end{equation}
and thus $f_2$ can be written as:
\begin{equation}
f_2(s_1,\ldots,s_{R}) = \sum_{i=1}^{2^R} l_{i,2} \, \left(\prod_{m=1}^{R} s_{m,\ind(i)[m]} \right).
\end{equation}

Having defined the function $f_2$, we can prove the following fact, which gives a direct formula for $f_1$. 
\begin{lemma}
The function $f_1$ can be expressed as:
\begin{equation}
f_1(s_1,\ldots,s_{R}) = \sum_{i=1}^{2^R} l_{i,1} \, \left(\prod_{m=1}^{R} s_{m,\ind(i)[m]} \right).
\end{equation}
\end{lemma}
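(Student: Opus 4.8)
The plan is to exploit the two structural relations that hold on the binary state set $\mathcal{S}_2 = \{(1,0),(0,1)\}$: firstly, every output pair satisfies $f_1 = 1 - f_2$, and secondly, every LUT entry $l_i=(l_{i,1},l_{i,2})\in\mathcal{S}_2$ satisfies $l_{i,1} = 1 - l_{i,2}$. Since the formula for $f_2$ is already established, I would start from the identity $f_1 = 1 - f_2$ and substitute the known expression for $f_2$, so that the task reduces to rewriting the constant $1$ and the coefficients $l_{i,2}$ in terms of the quantities appearing in the target formula.

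The key step, and the one I expect to be the crux of the argument, is to establish the partition-of-unity identity
\[
\sum_{i=1}^{2^R} \prod_{m=1}^{R} s_{m,\ind(i)[m]} = 1\,.
\]
The idea is that, as $i$ runs from $1$ to $2^R$, the vector $\ind(i)$ ranges exactly once over every element of $\{1,2\}^R$; this is precisely the content of Eq.~(\ref{eq:bin-def}), which sets up a bijection between the index $i$ and the digit pattern $\ind(i)$. Consequently the sum over $i$ can be rewritten as a sum over all tuples $(k_1,\dotsc,k_R)\in\{1,2\}^R$, which factorizes as a product of independent one-variable sums:
\[
\sum_{(k_1,\dotsc,k_R)\in\{1,2\}^R} \prod_{m=1}^{R} s_{m,k_m} = \prod_{m=1}^{R}\left( s_{m,1} + s_{m,2}\right)\,.
\]
Since each state $s_m=(s_{m,1},s_{m,2})\in\mathcal{S}_2$ satisfies $s_{m,1}+s_{m,2}=1$, every factor equals $1$ and the product equals $1$.

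With this identity in hand, the remainder is a short algebraic manipulation. I would replace the constant $1$ in $f_1 = 1 - f_2$ by the left-hand side of the identity, giving
\[
f_1 = \sum_{i=1}^{2^R}\left(\prod_{m=1}^{R} s_{m,\ind(i)[m]}\right) - \sum_{i=1}^{2^R} l_{i,2}\left(\prod_{m=1}^{R} s_{m,\ind(i)[m]}\right)\,,
\]
then merge the two sums and apply $1 - l_{i,2} = l_{i,1}$ term by term to obtain the claimed formula. The only point requiring care is the bookkeeping in the factorization step; once the correspondence between $i$ and $\ind(i)$ is made explicit, the computation is routine.
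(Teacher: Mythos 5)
Your proof is correct, but note that the paper itself states this lemma \emph{without} proof, so the fair comparison is with the multi-state arguments of Section~\ref{sec:multi-ca}, of which your proof is precisely the $N=2$ specialization: your partition-of-unity identity is Eq.~(\ref{eq:lem1}), and your manipulation of $f_1 = 1 - f_2$ together with $l_{i,1} = 1 - l_{i,2}$ is exactly the chain of equalities the paper later uses to prove $\sum_{j=1}^N f_j = 1$ (there via $P_{ij} = 1 - \sum_{k\neq j} P_{ik}$). The only substantive difference is how the partition of unity is established: you pass from the sum over $i$ to a sum over tuples via the bijection $i \leftrightarrow \ind(i)$ and factor it in one stroke as $\prod_{m=1}^{R}\left(s_{m,1}+s_{m,2}\right)$, whereas the paper peels off one coordinate at a time by regrouping and summing out $s_R$, then $s_{R-1}$, and so on --- the same distributive-law computation performed iteratively, so the two are mathematically interchangeable. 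Two small observations: in the discrete setting $\mathcal{S}_2$ in which the lemma is literally stated, there is an even shorter argument you did not mention --- for a fixed input $(s_1,\dotsc,s_R)\in\mathcal{S}_2^R$ exactly one monomial $\prod_{m=1}^{R} s_{m,\ind(i)[m]}$ equals $1$ and all others vanish, so $f_j(s_1,\dotsc,s_R) = l_{i^\ast,j}$ for that unique index $i^\ast$, which proves both component formulas at once; on the other hand, your route has the advantage of working verbatim on the simplex $S_c$ of Eq.~(\ref{eq:sc-2}), where the one-surviving-monomial argument fails, and that continuous case is what the subsequent CCA definition actually requires.
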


\noindent Due to the above, we can write a general expression for $f_j$, $j=1,2$, as:
\begin{equation}
f_j(s_1,\ldots,s_{R}) = \sum_{i=1}^{2^R} l_{i,j} \, \left(\prod_{m=1}^{R} s_{m,\ind(i)[m]} \right).
\label{eq:final-cca-2}
\end{equation}
Resorting to this vector notation, a CCA can be defined with a state set: 
\begin{equation}
S_c = \{(x_1, x_2) \in [0,1]^2 \mid x_1+x_2 = 1\}\subset \mathbb{R}^2\,,
\label{eq:sc-2}\end{equation}
and a local rule $f\colon S_c^R \to S_c$ given by Eq.\ (\ref{eq:final-cca-2}) with $(l_j)_{j=1}^{2^R} \in S_c^{2^R}$.

\begin{example}
The local rule $f_{150}$ of ECA 150 presented in Example \ref{exa:one} can be rewritten in the context of the state set $\mathcal{S}_2$, as $f\colon \mathcal{S}^3_2 \to \mathcal{S}_2$, $f=(f_1, f_2)$. Note that for $s_i\in\mathcal{S}_2$, it holds that $s_i=(s_{i,1}, s_{i,2})$. The local rule $f$ can be written as: 
\begin{align*}
f_1(s_1, s_2, s_3) & = s_{1,2}\,s_{2,2}\,s_{3,1} + s_{1,2}\,s_{2,1}\,s_{3,2} + s_{1,1}\,s_{2,2}\,s_{3,2} + s_{1,1}\,s_{2,1}\,s_{3,1} \\
f_2(s_1, s_2, s_3) & = s_{1,2}\,s_{2,2}\,s_{3,2} + s_{1,2}\,s_{2,1}\,s_{3,1} + s_{1,1}\,s_{2,2}\,s_{3,1} + s_{1,1}\,s_{2,1}\,s_{3,2}\,.\\[-\normalbaselineskip]\tag*{\qed}
\end{align*}
\end{example}

The representation given by Eq.~(\ref{eq:final-cca-2}) is equivalent to that given by Eq.~(\ref{eq:fca-gen}), but allows for an easier generalization to multi-state CAs. Therefore, we will use it throughout the remainder of this paper.

\subsection{Multi-state CAs}
\label{sec:multi-ca}
We start with generalizing the state set $\mathcal{S}_2$ defined in Section \ref{sec:binary-cca} to the case of $N$ states. Let $\mathcal{S}_N\subset \mathbb{R}^N$ be the set of all base vectors of the $N$--dimensional Euclidean space, {\it i.e.}
\begin{equation}
\mathcal{S}_N =\Big\{ e \in \{0,1\}^N \mid \norm{e} = 1\Big\}\,.
\label{eq:state-set}
\end{equation}
Let the $i$--th state in a multi-state CA be represented by the base vector $e_i$ which has zeros everywhere, except at position $i$, where it has a one. For instance, if $N=3$, then $\mathcal{S}_3 = \{ (1,0,0), (0,1,0), (0,0,1)\}$. For the sake of simplicity, in the remainder of this paper, we will write $\mathcal{S}$ instead of $\mathcal{S}_N$, assuming that $N$ is fixed.

We now generalize the function $\ind \colon \{1,\dotsc,N^R\} \to \{1,\dotsc,N\}^R$, defined above for the binary case, to enumerate the $N^R$ neighborhood configurations. Let $i\in\{1,\dotsc,N^R\}$ be the index identifying a neighborhood configuration. We assume that such a configuration consists of the following states:
\begin{equation}
(e_{\ind(i)[1]}, e_{\ind(i)[2]},\dotsc,e_{\ind(i)[R]})\,.
\end{equation}
We assume that $\ind$ satisfies the following equation:
\begin{equation}
i = 1 + \sum_{m=1}^{R}\Big(\ind(i)[R-(m-1)]-1\Big)\,N^{m-1}\,.
\end{equation}
A direct formula for calculating $\ind$ is given by:
\begin{equation}
\ind(i)[m] = \left\lfloor\frac{(i-1)\ \textrm{mod}\ N^{R-(m-1)}}{N^{R-m}}\right\rfloor + 1\,.
\end{equation}

Essentially, $\ind(i)$ yields the positional representation of $i-1$ in base $N$, padded with zeros on the left, so that it always has length $R$, where each of the digits is incremented by one. Hence, one easily finds:
\begin{align*}
\ind(1) &= (1,\dotsc,1,1)\,, \\
\ind(2) &= (1,\dotsc,1,2)\,, \\
\vdots \\
\ind(N) &= (1,\dotsc,1,N)\,, \\
\ind(N+1) &= (1,\dotsc,2,1)\,, \\
\vdots\\
\ind(N^R) &= (N,\dotsc,N,N)\,.
\end{align*}

Any local rule $f$ of an $N$--state, deterministic CA can be uniquely defined by writing down its outputs for all of the possible neighborhood configurations. Formally, let us consider the system of equations:
\begin{equation} 
f(e_{\ind(i)[1]},\dotsc,e_{\ind(i)[R]}) = l_{i},
\end{equation}
where $i\in\{1,\dotsc,N^R\}$ and $l_i \in \mathcal{S}$. Similarly to the binary case, the matrix $L= (l_i)_{i=1}^{N^R}\in\mathcal{S}^{N^R}$ will be called the LUT of a multi-state CA. Basically, $L$ is a matrix of $N^{R}$ columns and $N$ rows, containing only zeros and ones, in such a way that every column contains exactly one non-zero entry. Each of the matrices $L\in\mathcal{S}^{N^{R}}$ uniquely defines an $N$--state CA in terms of its local rule.

In order to define a CCA in the context of multiple states, we need to formally define its state set. Let $\mathcal{S}_c$ be defined as:
\begin{equation}
\mathcal{S}_c = \left\{ (x_1, \dotsc, x_N) \in [0,1]^N \mid \sum_{i=1}^{N} x_i = 1\right\}\subset \mathbb{R}^N\,,
\end{equation}
which is consistent with the definition for $N=2$ given by Eq.~$(\ref{eq:sc-2})$. Note that any $s\in\mathcal{S}_c$ satisfies $s_i = 1 - \sum_{j\neq i} s_j$. The set $\mathcal{S}_c$ is commonly referred to as a standard $(N-1)$--simplex \cite{rudin-principles}. 

\begin{example}
Let $N=3$. The set $\mathcal{S}_c$ is a 2D triangle placed in the 3D Euclidean space, with vertices $(1,0,0)$, $(0,1,0)$ and $(0,0,1)$. \qed
\end{example}

In the case of a CCA, the local rule is a vector function $f\colon \mathcal{S}_c^{R}\to\mathcal{S}_c$, given by $f=(f_1,\dotsc,f_N)$, for which there exists a matrix $P\in\mathcal{S}_c^{N^R}$, referred to as the generalized LUT, such that for $j\in\{1,\dotsc,N\}$ it holds that $f_j\colon\mathcal{S}_c^R\to[0,1]$ and:
\begin{equation}
f_j(s_1,\dotsc,s_R) = \sum_{i=1}^{N^R} P_{ij}\left(\prod_{m=1}^{R}s_{m,\ind(i)[m]}\right)\,.
\label{eq:def-f}
\end{equation}

We will show that the definition given by Eq.\ $(\ref{eq:def-f})$ is formally correct, meaning that the function $f$ with components $f_j$ satisfies $f\colon \mathcal{S}_c^{R}\to\mathcal{S}_c$, which is equivalent to showing that $f_j(s_1,\dotsc,s_R) \in [0,1]$ and $\sum_{j=1}^N f_j(s_1,\dotsc,s_R) = 1$ for any $(s_1,\dotsc,s_R)\in \mathcal{S}_c^R$.  

\begin{lemma}
For any $(s_1,\dotsc,s_R)\in\mathcal{S}_c^R$, it holds that
\begin{equation}
\sum_{i=1}^{N^R}\left[\prod_{m=1}^R s_{m,\ind(i)[m]}\right] = 1\,.
\label{eq:lem1}
\end{equation}
\end{lemma}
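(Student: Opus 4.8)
The plan is to exploit the fact that the function $\ind$ is a bijection between the index set $\{1,\dotsc,N^R\}$ and the set of all $R$--tuples $\{1,\dotsc,N\}^R$, which turns the single sum over $i$ into a sum over every way of choosing one coordinate index for each of the $R$ neighbors. First I would establish this bijection: the explicit formula $\ind(i)[m] = \lfloor ((i-1)\ \textrm{mod}\ N^{R-(m-1)})/N^{R-m}\rfloor + 1$ shows that $\ind(i)$ is exactly the positional base-$N$ representation of $i-1$ (padded to length $R$, each digit incremented by one), while the identity $i = 1 + \sum_{m=1}^{R}(\ind(i)[R-(m-1)]-1)\,N^{m-1}$ recovers $i$ uniquely from $\ind(i)$. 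Hence, as $i$ runs over $\{1,\dotsc,N^R\}$, the tuple $(\ind(i)[1],\dotsc,\ind(i)[R])$ runs exactly once over every element of $\{1,\dotsc,N\}^R$.

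Using this bijection, I would reindex the left-hand side of Eq.~$(\ref{eq:lem1})$ as a sum over tuples $(k_1,\dotsc,k_R)\in\{1,\dotsc,N\}^R$, namely
\[
\sum_{i=1}^{N^R}\prod_{m=1}^R s_{m,\ind(i)[m]} = \sum_{k_1=1}^N\cdots\sum_{k_R=1}^N\,\prod_{m=1}^R s_{m,k_m}\,.
\]
The key algebraic step is then the distributive law, which factorizes the multiple sum of products into a product of sums,
\[
\sum_{k_1=1}^N\cdots\sum_{k_R=1}^N\prod_{m=1}^R s_{m,k_m} = \prod_{m=1}^R\left(\sum_{k=1}^N s_{m,k}\right)\,.
\]
Finally, since each $s_m = (s_{m,1},\dotsc,s_{m,N})$ belongs to $\mathcal{S}_c$, the simplex normalization gives $\sum_{k=1}^N s_{m,k} = 1$ for every $m$, so the product on the right equals $1$, which is precisely the claim.

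I expect the only genuine obstacle to be making the bijection claim airtight, since everything afterwards is the standard \emph{product of sums equals sum of products} identity combined with the normalization $\sum_{k} s_{m,k} = 1$ defining $\mathcal{S}_c$. If one prefers to avoid invoking the distributive law for nested sums directly, the factorization can instead be justified by a short induction on $R$: the case $R=1$ is the simplex condition itself, and the inductive step peels off the outermost factor $\sum_{k_R} s_{R,k_R}=1$ from the remaining $(R-1)$--fold sum.
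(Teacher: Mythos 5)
Your proof is correct and takes essentially the same approach as the paper: both reindex the sum via the bijectivity of $\ind$ onto $\{1,\dotsc,N\}^R$ and then use the simplex condition $\sum_{k=1}^N s_{m,k}=1$ to collapse the sum one coordinate at a time. The paper's iterated regrouping procedure is precisely your distributive-law factorization (in the inductive form you sketch as an alternative), so the two arguments coincide.
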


\begin{proof}
Let
\[\Theta = \sum_{i=1}^{N^R}\left[\prod_{m=1}^R s_{m,\ind(i)[m]}\right].\]
Note that for every $j\in\{1,\dotsc,N\}^R$, there exists exactly one $i\in\{1,\dotsc,N^R\}$ such that $j = \ind(i)$. Due to this, it holds that:
\[ \Theta = \sum_{j\in\{1,\dotsc,N\}^R}\left[\prod_{m=1}^{R} s_{m,j_m}\right].\]
By regrouping the elements of the above sum, we may write:
\[ \Theta = \sum_{j\in\{1,\dotsc,N\}^{R-1}}\left[\prod_{m=1}^{R-1} s_{m,j_m}\left(\sum_{i=1}^{N} s_{R,i}\right)\right].\]
Since $s_R \in \mathcal{S}_c$, we know that $\sum_{i=1}^{N} s_{R,i} = 1$, and thus:
\[ \Theta = \sum_{j\in\{1,\dotsc,N\}^{R-1}}\left[\prod_{m=1}^{R-1} s_{m,j_m}\right].\]
By repeating this regrouping procedure $R-1$ times, we finally get:
\[ \Theta = \sum_{j=1}^N s_{1,j} = 1\,.\qedhere\]
\end{proof}

\begin{proposition}
For any $j\in\{1,\dotsc,N\}$ and any $(s_1,\dotsc,s_R)\in\mathcal{S}_c^R$, it holds that
$f_j(s_1,\dotsc,s_R) \in [0,1]$.
\end{proposition}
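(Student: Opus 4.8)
The plan is to verify the two inequalities $f_j\ge 0$ and $f_j\le 1$ separately, exploiting that each state vector $s_m\in\mathcal{S}_c$ and each row $P_i=(P_{i1},\dots,P_{iN})\in\mathcal{S}_c$ consists of nonnegative entries, bounded above by one, that sum to one. The nonnegativity of $f_j$ should come for free: by the simplex constraint on $s_m$, every factor $s_{m,\ind(i)[m]}$ lies in $[0,1]$, so each product $\prod_{m=1}^{R}s_{m,\ind(i)[m]}$ is nonnegative, and since $P_{ij}\ge 0$ as well, the sum defining $f_j$ in Eq.~(\ref{eq:def-f}) is a sum of nonnegative terms.

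For the upper bound I would invoke the constraint $P_{ij}\le 1$, which again follows from $P_i\in\mathcal{S}_c$, together with the nonnegativity of the products just established. Replacing each coefficient $P_{ij}$ by $1$ can only increase the value of the sum, so
\[
f_j(s_1,\dots,s_R)=\sum_{i=1}^{N^R}P_{ij}\left(\prod_{m=1}^{R}s_{m,\ind(i)[m]}\right)\le\sum_{i=1}^{N^R}\left(\prod_{m=1}^{R}s_{m,\ind(i)[m]}\right).
\]
The right-hand side is precisely the expression shown to equal $1$ in the preceding lemma (Eq.~(\ref{eq:lem1})), so $f_j\le 1$ follows immediately, and combining the two bounds gives $f_j(s_1,\dots,s_R)\in[0,1]$.

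I do not expect any genuine obstacle here. The only point requiring a moment of care is the logical ordering: the termwise replacement $P_{ij}\le 1$ is legitimate only because the multiplying products are nonnegative, which is exactly why the nonnegativity step must be carried out first. In effect, the real work — summing the products of state components to $1$ — has already been done in the lemma, so this proposition is essentially a corollary, and the proof should be short.
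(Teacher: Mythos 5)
Your proof is correct and follows essentially the same route as the paper's: nonnegativity of every summand gives $f_j \geq 0$, and bounding each coefficient $P_{ij}$ by $1$ reduces the upper bound to the identity $\sum_{i=1}^{N^R}\prod_{m=1}^{R}s_{m,\ind(i)[m]} = 1$ established in the preceding lemma. Your extra remark about the logical ordering (nonnegativity of the products licensing the termwise replacement) is a point the paper leaves implicit, but the argument is otherwise identical.
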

\begin{proof}
Firstly, $f_j(s_1,\dotsc,s_R)\geq 0$ is fulfilled since all of the elements in the sum in Eq.\ (\ref{eq:def-f}) are non-negative. Therefore, showing that $f_j(s_1,\dotsc,s_R)\leq 1$ is sufficient to prove the proposition. Since $P_{ij}\leq 1$, we can write:
\[ f_j(s_1,\dotsc,s_R) = \sum_{i=1}^{N^R}P_{ij}\left(\prod_{m=1}^R s_{m,\ind(i)[m]}\right)\leq
\sum_{i=1}^{N^R}\prod_{m=1}^R s_{m,\ind(i)[m]} = 1\,. \]
\end{proof}

\begin{proposition}
For any $(s_1,\dotsc,s_R)\in\mathcal{S}_c^R$, it holds that:
\[\sum_{j=1}^N f_j(s_1,\dotsc,s_R) = 1\,.\]
\end{proposition}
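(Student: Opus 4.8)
The plan is to reduce the claim to the preceding lemma (Eq.~(\ref{eq:lem1})) by interchanging the order of summation and exploiting the fact that every row of the generalized LUT $P$ lies in $\mathcal{S}_c$.

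First I would substitute the definition from Eq.~(\ref{eq:def-f}) into $\sum_{j=1}^N f_j(s_1,\dotsc,s_R)$, obtaining a finite double sum over $j\in\{1,\dotsc,N\}$ and $i\in\{1,\dotsc,N^R\}$. Since both sums are finite, I can freely swap their order and pull the product $\prod_{m=1}^{R}s_{m,\ind(i)[m]}$, which does not depend on $j$, outside the inner summation over $j$. This rewrites the expression as $\sum_{i=1}^{N^R}\bigl(\sum_{j=1}^{N}P_{ij}\bigr)\bigl(\prod_{m=1}^{R}s_{m,\ind(i)[m]}\bigr)$.

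The key step is then to observe that, because $P\in\mathcal{S}_c^{N^R}$, every row $P_i=(P_{i,1},\dotsc,P_{i,N})$ belongs to $\mathcal{S}_c$, and hence $\sum_{j=1}^{N}P_{ij}=1$ by the defining constraint of the simplex $\mathcal{S}_c$. The inner sum over $j$ therefore collapses to $1$, leaving precisely $\sum_{i=1}^{N^R}\prod_{m=1}^{R}s_{m,\ind(i)[m]}$.

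Finally I would invoke the preceding lemma, Eq.~(\ref{eq:lem1}), which states exactly that this remaining sum equals $1$ for any $(s_1,\dotsc,s_R)\in\mathcal{S}_c^R$, completing the argument. I expect no genuine obstacle here: the proof is a routine interchange of finite sums combined with the row-stochasticity of $P$ and an already-established identity. The only point requiring a little care is to recognize that the constraint $\sum_{j}P_{ij}=1$ comes from the \emph{rows} $P_i$ lying in $\mathcal{S}_c$, and not from any property of the columns $(P_{ij})_{i}$ used to define the component functions $f_j$.
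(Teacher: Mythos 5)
Your proof is correct and rests on exactly the same two ingredients as the paper's own proof: the constraint $\sum_{j=1}^{N}P_{ij}=1$ for each $i$, which follows from $P\in\mathcal{S}_c^{N^R}$, combined with the preceding lemma, Eq.~(\ref{eq:lem1}). The paper organizes the computation slightly differently---it fixes $j$ and establishes $1-f_j(s_1,\dotsc,s_R)=\sum_{k\neq j}f_k(s_1,\dotsc,s_R)$ by substituting $P_{ij}=1-\sum_{k\neq j}P_{ik}$---but this is the same interchange-of-finite-sums argument in a more roundabout form, and your direct version, collapsing $\sum_{j}P_{ij}$ inside the swapped double sum, is if anything cleaner.
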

\begin{proof}
Let $j\in\{1,\dotsc,N\}$, then it suffices to show that:
\begin{equation}
1 - f_j(s_1,\dotsc,s_R) = \sum_{k\neq j} f_k(s_1,\dotsc,s_R)\,.
\end{equation}
This identity is proven by the following chain of equalities:
\begin{align*}
1 - f_j(s_1,\dotsc,s_R) &= 1 - \sum_{i=1}^{N^R}P_{ij}\left(\prod_{m=1}^R s_{m,\ind(i)[m]}\right) \\
&=1 - \sum_{i=1}^{N^R}\left(1-\sum_{k\neq j}P_{ik}\right)\prod_{m=1}^R s_{m,\ind(i)[m]}  \\
&=1 - \sum_{i=1}^{N^R}\prod_{m=1}^R s_{m,\ind(i)[m]} + \sum_{i=1}^{N^R}\sum_{k\neq j}P_{ik}\left(\prod_{m=1}^R s_{m,\ind(i)[m]}\right) \\
&= \sum_{k\neq j}\sum_{i=1}^{N^R}P_{ik}\left(\prod_{m=1}^R s_{m,\ind(i)[m]}\right)  
= \sum_{k\neq j} f_k(s_1,\dotsc,s_R)\,.\qedhere
\end{align*}
\end{proof}

We have shown the correctness of the CCA definition. We conclude this section with the observation that the local rule $f\colon \mathcal{S}^R\to\mathcal{S}$ of any $N$--state, deterministic, 1D CA can always be written in the form of Eq.~$(\ref{eq:def-f})$, where $(s_1,\dotsc,s_R)\in\mathcal{S}^R$, $f_j\colon \mathcal{S}^R\to\{0,1\}$ and $\sum_{j=1}^{N} f_j(s_1,\dotsc,s_R) = 1$. This shows that CCAs are a generalization of $N$--state, deterministic CAs. 

\section{Properties of SCAs}
\label{sec:multi-sca}
In this section, we provide a general definition of an $N$--state SCA, which is formulated using the notation introduced in Section \ref{sec:cca} for multi-state CAs and CCAs. Subsequently, we show the nature of the relation between SCAs and CCAs. Finally, we present a method for decomposing an SCA into a set of deterministic CAs.

\subsection{General construction of multi-state SCAs}
We define SCAs, the stochastic counterpart of multi-state CAs. We assume that the state set $\mathcal{S}$ is given by Eq.\ (\ref{eq:state-set}) and contains all the standard base vectors $e_i$. In such a setting, the states are assigned according to a probability distribution, which depends on the neighborhood configuration at the previous time step. Therefore, SCAs are not defined through a local rule, but rather by a set of conditional expressions, \hl{which for $j \in\{1,\dotsc,N\}$ and $k \in \{ 1,\dotsc,N^R\}$ can be written as}:
\begin{equation} 
\mathbb{P}\Big(s(c_i, t+1) = e_j \mid s(c_{i-r},\dotsc,c_{i+r},t) = (e_{\ind(k)[1]},\dotsc,e_{\ind(k)[R]})\Big)  = p_{kj}\,.
\label{eq:pkj}
\end{equation}
\hl{The conditional probabilities $p_{kj}$ in Eq.~(\ref{eq:pkj}) do not depend on the choice of $i$ and $t$. Obviously, it holds that $p_k = (p_{k1}, \dotsc, p_{kN}) \in \mathcal{S}_c$ for all $k$. Furthermore, the matrix $P = (p_{kj})$ can be considered an element of $\mathcal{S}_c^{N^R}$. Consequently, every element in $\mathcal{S}_c^{N^{R}}$}
uniquely defines the probability distribution of some local rule. The matrix $P$ will be referred to as the probabilistic lookup table (pLUT). Such matrices are commonly known as stochastic matrices \cite{stoma} and are typically used in the analysis of Markov chains.

Any SCA is uniquely defined by a pLUT belonging to $\mathcal{S}_c^{N^{R}}$. Since we know from Section \ref{sec:multi-ca} that every element of $\mathcal{S}_c^{N^{R}}$ uniquely defines a CCA, we elaborate on the relationship between an SCA defined by a pLUT $P\in\mathcal{S}_c^{N^R}$ and a CCA defined by the same $P$ considered as generalized LUT. 

\subsection{Relationship between CCAs and SCAs}
In contrast to deterministic CAs, only the initial state $s(\cdot, 0)$ is known with certainty in the case of SCAs, while for $s(\cdot, t)$, $t>0$, only a probability distribution can be calculated. Let $\pi(c_i, t\mid I_0)$ denote the probability distribution over the set of states $\mathcal{S}$ for the $i$--th cell at time step $t$, given that the evolution started from the initial condition $I_0$, {\it i.e.}\ $s(\cdot, 0) = I_0$. Whenever it will not bring confusion, we will omit $I_0$ and write $\pi(c_i, t)$. Since there are $N$ states, we need to specify $N$ probabilities summing up to 1, to define $\pi(c_i, t)$. For this reason, let us represent $\pi(c_i, t)$ as a vector of probabilities, {\it i.e.}\ $\pi(c_i, t)\in\mathcal{S}_c$, such that:
\begin{equation}
\pi(c_i, t \mid I_0)[k]=  \mathbb{P}\Big(s(c_i, t) = e_j \mid s(\cdot, 0) = I_0\Big),
\end{equation}
where $j=1,\dotsc,N$. 

When using SCAs in practical problem settings, one might be interested in computing the values of $\pi(c_i, t)$ for any $t$. Normally, this is achieved by means of sampling methods \cite{aids,2136045,4167259}, {\it i.e.}\ multiple space-time diagrams are generated from the same initial condition and probability distributions are estimated, based on the frequencies of reaching any of the states in a given cell. Unfortunately, such an approach has serious drawbacks, especially in terms of the computational burden it brings along. Yet, one can calculate $\pi(c_i, t)$ much faster by relying on CCAs.

\begin{proposition}
Let $P\in\mathcal{S}_c^{N^R}$ be the pLUT of an SCA, and let $f\colon\mathcal{S}_c^{R} \to \mathcal{S}_c$ be a function defined according to Eq.\ $(\ref{eq:def-f})$ using $P$ as generalized LUT, then:
\begin{equation}
\pi(c_i, t+1) = f(\pi(c_{i-r}, t), \dotsc, \pi(c_{i+r}, t))\,,
\end{equation}
for any $i$ and $t$.
\label{prop:cca-mstate}
\end{proposition}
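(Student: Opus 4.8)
The plan is to compute the $j$-th component of $\pi(c_i,t+1)$ straight from its definition as a marginal probability and to match the outcome term by term against Eq.~(\ref{eq:def-f}). First I would condition on the neighborhood configuration of $c_i$ at time $t$ using the law of total probability. Because the $N^R$ possible neighborhood configurations are enumerated by $k\in\{1,\dotsc,N^R\}$ through $\ind$, and because the defining conditional probabilities of the SCA in Eq.~(\ref{eq:pkj}) equal $p_{kj}=P_{kj}$ regardless of $i$ and $t$, this gives
\begin{equation*}
\pi(c_i,t+1)[j] = \sum_{k=1}^{N^R} P_{kj}\,\mathbb{P}\Big(s(c_{i-r},\dotsc,c_{i+r},t) = (e_{\ind(k)[1]},\dotsc,e_{\ind(k)[R]})\Big)\,.
\end{equation*}

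The second step is to rewrite the probability of the joint neighborhood configuration $k$ as the product $\prod_{m=1}^{R}\pi(c_{i-r+m-1},t)[\ind(k)[m]]$, that is, as the product over the $R$ neighbors of the marginal probability that the $m$-th neighbor occupies state $e_{\ind(k)[m]}$. Granting this, the right-hand side becomes $\sum_{k=1}^{N^R} P_{kj}\prod_{m=1}^{R}\pi(c_{i-r+m-1},t)[\ind(k)[m]]$, which is exactly $f_j(\pi(c_{i-r},t),\dotsc,\pi(c_{i+r},t))$ by Eq.~(\ref{eq:def-f}) once we identify the argument $s_m$ with $\pi(c_{i-r+m-1},t)\in\mathcal{S}_c$. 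Collecting the components for $j=1,\dotsc,N$ then yields the claimed vector identity, and the two earlier Propositions establishing that $f$ maps $\mathcal{S}_c^R$ into $\mathcal{S}_c$ guarantee that the right-hand side is automatically a legitimate probability distribution.

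The crux of the whole argument is the factorization in the second step, since it is equivalent to asserting that the states of the $R$ cells in the neighborhood of $c_i$ are stochastically independent at time $t$. This is immediate at $t=0$, where the configuration is the fixed initial condition $I_0$, each marginal at time $0$ is a base vector, and the product of base-vector components reproduces the degenerate joint distribution; I would take this as the base case of an induction on $t$. The difficulty is the inductive step: the fact that the cell updates are conditionally independent given the entire time-$t$ configuration does not, on its own, force the neighboring marginals at time $t$ to be independent, because overlapping neighborhoods share common ancestors. I therefore expect the delicate point to be stating cleanly the independence assumption on the cell-wise distributions under which the factorization---and hence the exactness of the cell-wise recursion---holds, and checking that $\pi(c_i,t+1)$ genuinely depends only on the marginals $\pi(c_{i-r},t),\dotsc,\pi(c_{i+r},t)$ rather than on their joint law.
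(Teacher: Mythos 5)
Your two computational steps are precisely the paper's proof: the paper invokes the Total Probability Theorem to obtain Eq.~(\ref{eq:total-prob}), substitutes the conditional probabilities from Eq.~(\ref{eq:pkj}), and factorizes the joint neighborhood probability exactly as you do --- that factorization is the paper's Eq.~(\ref{eq:nei-prob}), which is \emph{asserted without justification} (``Assuming that $\pi(\cdot,t)$ is known, the probability \ldots can be calculated with the following formula''). So the step you single out as the crux is not something you failed to find in your attempt; it is the one step the paper's own proof leaves unproven.

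Moreover, your skepticism about the inductive step is well founded: the factorization is equivalent to stochastic independence of the neighborhood cells at time $t$, which holds at $t=0$ (deterministic $I_0$) and at $t=1$ (independent update noise applied to a fixed configuration), but fails in general from $t=2$ onward, exactly for the reason you give --- overlapping neighborhoods share common ancestors. A concrete check: take $N=2$, $r=1$ and the SCA with
\[
\mathbb{P}\big(s(c_i,t+1)=1 \mid s(c_{i-1},c_i,c_{i+1},t)=(a,b,c)\big) = \tfrac{1}{2} + \tfrac{1}{2}\,a\,c\,,
\]
started from the all-zero configuration. At $t=1$ the cells are i.i.d.\ Bernoulli$(1/2)$, and at $t=2$ each marginal is $5/8$, but $c_{i-1}$ and $c_{i+1}$ are positively correlated through their common ancestor $c_i$: one computes $\mathbb{P}\big(s(c_{i-1},2)=s(c_{i+1},2)=1\big)=13/32 > (5/8)^2$. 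Hence the true marginal at $t=3$ is $\tfrac12+\tfrac12\cdot\tfrac{13}{32}=\tfrac{45}{64}=\tfrac{90}{128}$, whereas the recursion of Proposition~\ref{prop:cca-mstate} yields $\tfrac12+\tfrac12\cdot(\tfrac58)^2=\tfrac{89}{128}$. So the proposition, read literally with $\pi(c_i,t)$ the true cell-wise marginals of the SCA, is exact only for the first two steps; beyond that the CCA recursion is the mean-field approximation of those marginals, and making the statement correct requires either adding a product-measure (independence) hypothesis on the joint law at time $t$, or redefining $\pi(\cdot,t)$ as the quantity generated by the CCA iteration itself. Your attempt is therefore not missing an idea that the paper supplies; it correctly isolates a genuine gap that the paper's proof glosses over.
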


\begin{proof}
Let $k\in\{1,\dotsc,N^R\}$ and $j\in\{1,\dotsc,N\}$, then $p_{kj}$ is the probability of reaching state $e_j$ if the neighborhood configuration was known to be $(e_{\ind(k)[1]},\dotsc,e_{\ind(k)[R]})$, which will be referred to as the $k$--th neighborhood. Assuming that $\pi(\cdot, t)$ is known, the probability of cells $(c_{i-r},\dotsc,c_{i+r})$ being in the $k$--th neighborhood can be calculated with the following formula:
\begin{multline}
\mathbb{P}\Big(s(c_{i-r},\dotsc,c_{i+r}, t)=(e_{\ind(k)[1]},\dotsc,e_{\ind(k)[R]})\Big)\\ = \prod_{m=1}^{R} \pi(c_{i-r+m-1}, t)\big[\ind(k)[m]\big].
\label{eq:nei-prob}
\end{multline}
Due to the Total Probability Theorem, the probability of reaching state $e_j$ at time step $t+1$ is expressed as:
\begin{multline}
\mathbb{P}\Big(s(c_i, t+1) = e_j\Big) = \\
\sum_{k=1}^{N^R} \mathbb{P}\Big(s(c_i, t+1) = e_j \mid s(c_{i-r},\dotsc, c_{i+r}, t)=(e_{\ind(k)[1]},\dotsc,e_{\ind(k)[R]})\Big) \times\\
\mathbb{P}\Big(s(c_{i-r},\dotsc, c_{i+r}, t)=(e_{\ind(k)[1]},\dotsc,e_{\ind(k)[R]})\Big)\,.
\label{eq:total-prob}
\end{multline}
Substituting Eqs.~$(\ref{eq:pkj})$ and $(\ref{eq:nei-prob})$ in Eq.\ $(\ref{eq:total-prob})$, we get:
\begin{equation}
\mathbb{P}\Big(s(c_i, t+1) = e_j\Big) =  \sum_{k=1}^{N^R} 
p_{kj}\left(\prod_{m=1}^{R} \pi(c_{i-r+m-1}, t)\big[\ind(k)[m]\big]\right).
\label{eq:pre-f}
\end{equation}
Taking into account the definition of the function $f$ in Eq.\ $(\ref{eq:def-f})$, we can rewrite Eq.\ $(\ref{eq:pre-f})$ as:
\[
\mathbb{P}\Big(s(c_i, t+1) = e_j\Big) =  f_j(\pi(c_{i-r},t),\dotsc,\pi(c_{i+r},t))\,.
\]
Since $\pi(c_i, t+1)[j] = \mathbb{P}\big(s(c_i, t+1) = e_j\big)$, the following is proven:
\[\pi(c_i, t+1) = f(\pi(c_{i-r},t),\dotsc,\pi(c_{i+r},t))\,.\qedhere\]
\end{proof}

As shown above, the construction of CCAs is directly connected to the Total Probability Theorem, and thus it enables us to better understand the evolution of cell-wise probability distributions of SCAs. It is important to highlight that the evolution of those distributions is deterministic in SCAs, although the dynamical system itself is stochastic. The finding above agrees with \cite{Betel20095}, where a similar result in the case of binary CAs was shown. Yet, this paper was confined to study the asymptotic behavior of deterministic CAs. \hl{Interestingly, in~\cite{busic2013} an SCA is defined as a discrete-time deterministic dynamical system acting on the set of probability measures on the set of all configurations. Although the definition presented there is much more abstract, and suits mostly theoretical applications, it reassembles the main ideas of CCAs.}

\subsection{Decomposition of SCAs}
In the previous subsection, we have shown that CCAs are directly related to SCAs, since the evolution of a CCA is equivalent to the evolution of the cell-wise probability distributions of an SCA. Here, we uncover the relationship between SCAs and deterministic, $N$--state CAs, and discuss the possibility of decomposing an SCA into deterministic CAs. Let us start with an introductory example motivating our general construction.

\begin{example}[$\alpha$--asynchronous CAs]
Classically, states in deterministic CAs are updated synchronously, {\it i.e.}\ a new state is assigned to all cells simultaneously at every time step according to the local rule. Yet, different approaches of breaking the synchronicity of CAs have been proposed \cite{schonfisch1999synchronous}. Interestingly, the choice of the update scheme, which defines the order or timing of cell state updates, has very important repercussions on the dynamical properties of CAs~\cite{Baetens2012383}. Here we focus on so-called $\alpha$-asynchronous CAs ($\alpha$-ACAs)~\cite{Fates05anexperimental}.

Any $\alpha$-ACA is defined by its deterministic counterpart $A$ and a probability~$\alpha$, called the synchrony rate, which controls whether or not its cells are updated. More precisely, $\alpha$ is the probability of applying the local rule $f$ of $A$, {\it i.e.}:
\[
s(c_i, t+1) = 
\begin{cases} 
f(s(c_{i-r},\dotsc,c_{i+r},t)) &, \textrm{with probability $\alpha$}, \\
s(c_i, t) &, \textrm{otherwise}.
\end{cases}
\]
Note that if $\alpha=0$, such a system remains at its initial configuration, whereas the system is equivalent to CA $A$ if $\alpha=1$.\qed 
\end{example}

The essential property of the construction presented above is that an $\alpha$--ACA can also be seen as a synchronous, stochastic system in which the local rule of $A$ is selected with probability $\alpha$, while the identity rule is selected with probability~$1-\alpha$. Hence, we may say that the rule of a CA $A$ is stochastically mixed with the identity rule. 

The idea behind $\alpha$--ACA can be easily extended to the mixing of $q\geq 2$ deterministic rules. Consider synchronous, deterministic CAs $A_1, \dotsc, A_q$ and probabilities $\alpha_1, \dotsc, \alpha_q$ that satisfy $\sum_{i=1}^{q} \alpha_i = 1$. Then, we define an SCA $\widetilde{A}$ in which the state of every cell at every consecutive time step is obtained by evaluating one of the local rules, corresponding to one of the $A_i$. The rule is chosen randomly and independently for every cell at each time step with the selection probability for CA $A_i$ being $\alpha_i$.  Such systems, referred to as stochastic mixtures of deterministic CAs in the remainder of this paper, hold very interesting properties that are currently also investigated by others \cite{IOPORT.05887733}. For technical reasons, we assume that the local rules of $A_1, \dotsc, A_q$ are defined with a common radius $r\geq 0$. Since each local rule with radius $r'<r$ can be represented equivalently as a local rule with radius $r$, we are not loosing generality with this assumption. 

The first observation is a direct consequence of the definition presented above.
\begin{fact}
Assume that $r\geq 0$ is the radius of the neighborhood of the local rules of automata $A_1, \dotsc, A_q$. Let $R=2\,r+1$ and suppose that $L_i \in \mathcal{S}^{N^R}$ is the LUT of CA $A_i$, for $i=1,\dotsc,q$. Then the pLUT $P\in\mathcal{S}_c^{N^R}$ of the stochastic mixture~$\widetilde{A}$ satisfies $P = \sum_{i=1}^{q} \alpha_i L_i$.
\label{fac:seven}
\end{fact}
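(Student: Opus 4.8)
The plan is to compute the entries of the pLUT of $\widetilde{A}$ directly from the definition of a stochastic mixture and to recognize the result as the corresponding entries of $\sum_{i=1}^{q}\alpha_i L_i$. Recall that the pLUT $P=(p_{kj})$ is determined by the conditional probabilities in Eq.~(\ref{eq:pkj}): $p_{kj}$ is the probability that a cell assumes state $e_j$ at the next time step, given that its neighborhood currently holds the $k$-th configuration $(e_{\ind(k)[1]},\dotsc,e_{\ind(k)[R]})$. On the other hand, writing $L_m=(l^{(m)}_k)_{k=1}^{N^R}$ with $l^{(m)}_k\in\mathcal{S}$, the entry $(L_m)_{kj}$ equals $1$ exactly when the local rule of $A_m$ maps the $k$-th neighborhood configuration to $e_j$, and equals $0$ otherwise; equivalently, $l^{(m)}_k=e_j$ precisely when $(L_m)_{kj}=1$.

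First I would fix an arbitrary cell $c_i$, time step $t$, configuration index $k\in\{1,\dotsc,N^R\}$ and state index $j\in\{1,\dotsc,N\}$, and condition on the event that the neighborhood of $c_i$ holds the $k$-th configuration. By the definition of the mixture, the state of $c_i$ at time $t+1$ is obtained by selecting one of the rules $A_1,\dotsc,A_q$ — rule $A_m$ with probability $\alpha_m$ — and applying its (deterministic) local rule to the current neighborhood configuration. Crucially, this selection is carried out independently of the configuration, so conditioning on the $k$-th neighborhood leaves the selection probabilities unchanged. Given that $A_m$ is chosen, the cell deterministically assumes the state $l^{(m)}_k$, which equals $e_j$ if and only if $(L_m)_{kj}=1$.

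Next I would apply the law of total probability over the choice of rule. Since the rules are chosen with probabilities $\alpha_m$ and each choice leads deterministically either to state $e_j$ or not, I obtain
\[
p_{kj}=\sum_{m=1}^{q}\alpha_m\,(L_m)_{kj}\,.
\]
As this holds for every $k$ and $j$, it is precisely the entrywise statement $P=\sum_{m=1}^{q}\alpha_m L_m$; in particular $p_{kj}$ is independent of the choice of $i$ and $t$, as required of a pLUT. Finally, I would verify that this $P$ is a valid pLUT, i.e.\ $P\in\mathcal{S}_c^{N^R}$: each $l^{(m)}_k$ lies in $\mathcal{S}\subset\mathcal{S}_c$, and since $\mathcal{S}_c$ is a standard simplex (hence convex) and $\sum_{m}\alpha_m=1$, the convex combination $\sum_{m}\alpha_m\,l^{(m)}_k$ again lies in $\mathcal{S}_c$ for every $k$.

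This argument is almost entirely bookkeeping, and I do not expect a genuine obstacle. The one point that warrants care is the independence of the rule selection from the current configuration: this is exactly what allows the conditioning on the neighborhood to factor out cleanly and reduces the whole computation to a single application of the law of total probability. Everything else amounts to matching the definition of the pLUT in Eq.~(\ref{eq:pkj}) against the definition of the mixture $\widetilde{A}$.
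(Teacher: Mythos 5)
Your proof is correct and is precisely the argument the paper leaves implicit: Fact~\ref{fac:seven} is stated without proof as ``a direct consequence of the definition,'' and the intended justification is exactly your step of conditioning on the $k$-th neighborhood configuration and applying the law of total probability over the rule selection --- which, being independent of the configuration, factors out cleanly to give $p_{kj}=\sum_{m=1}^{q}\alpha_m\,(L_m)_{kj}$ entrywise. Your closing check that $P\in\mathcal{S}_c^{N^R}$, via convexity of the standard simplex and $\sum_m\alpha_m=1$, is the same routine verification the paper's framing of convex combinations presupposes.
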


Sums of the form $\sum_{i=1}^{q} \alpha_i L_i$, where $\alpha_i \in [0,1]$ and $\sum_{i=1}^{q} \alpha_i=1$ are commonly referred to as convex combinations, which allows us to rephrase Fact \ref{fac:seven} as: the pLUT of a stochastic mixture of deterministic CAs is the corresponding convex combination of the LUTs of the mixed CAs.

Having defined the concept of a stochastic mixture, we are interested in characterizing this class of SCAs. Interestingly, we are able to show that in general there are no characteristics that distinguish stochastic mixtures from other SCAs, since any SCA can be represented as a stochastic mixture. This fact is not surprising if we envisage a stochastic mixture of deterministic CAs as a convex combination of LUTs of deterministic CAs. We can rely on the classical Krein--Milman theorem from convex set theory \cite{Krein1940}. It states that a compact convex set is the convex hull of its extreme points. In our context, the set $\mathcal{S}^{N^R}$ containing the LUTs of deterministic CAs is the set of extreme points of $\mathcal{S}_c^{N^R}$, which is convex and compact. The convex hull of $\mathcal{S}^{N^R}$ is the set of all possible convex combinations of elements from $\mathcal{S}^{N^R}$, which is precisely the set of all stochastic mixtures of deterministic CAs. Due to this, we can state the following theorem.

\begin{thm}
Any SCA can be represented as a stochastic mixture of a finite number of deterministic CAs.
\end{thm}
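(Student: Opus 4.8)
The plan is to use Fact~\ref{fac:seven} to turn the statement into a purely convex-geometric one. By that fact, a stochastic mixture of deterministic CAs $A_1,\dotsc,A_q$ with weights $\alpha_1,\dotsc,\alpha_q$ has pLUT $\sum_{i=1}^{q}\alpha_i L_i$, so it suffices to show that \emph{every} pLUT $P\in\mathcal{S}_c^{N^R}$ can be written as a convex combination of elements of $\mathcal{S}^{N^R}$. Since $\mathcal{S}^{N^R}$ is a \emph{finite} set (there are $N^{N^R}$ deterministic local rules), any such convex combination automatically involves only finitely many CAs; hence the finiteness claimed in the theorem is free once any convex representation is produced.

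Rather than invoking the Krein--Milman theorem abstractly, I would give the representation explicitly, exploiting the product structure $\mathcal{S}_c^{N^R}=(\mathcal{S}_c)^{N^R}$. A deterministic LUT is nothing but a choice function $\sigma\colon\{1,\dotsc,N^R\}\to\{1,\dotsc,N\}$, whose associated LUT $L^{\sigma}\in\mathcal{S}^{N^R}$ has $k$--th row equal to the base vector $e_{\sigma(k)}$. Given $P=(p_{kj})$, I would assign to each $\sigma$ the weight
\[
\alpha_{\sigma}=\prod_{k=1}^{N^R} p_{k,\sigma(k)}\,,
\]
which is the probability that, choosing a state for each row $k$ independently according to the distribution $p_k\in\mathcal{S}_c$, the outcome is precisely $\sigma$.

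Three checks then complete the argument. First, $\alpha_{\sigma}\geq 0$ for every $\sigma$, and $\sum_{\sigma}\alpha_{\sigma}=\prod_{k=1}^{N^R}\big(\sum_{j=1}^{N}p_{kj}\big)=1$ since each $p_k\in\mathcal{S}_c$. Second, for a fixed row $k$ and state $j$, factoring $p_{kj}$ out of the sum over all $\sigma$ with $\sigma(k)=j$ and using that each of the remaining $N^R-1$ row-sums equals one yields
\[
\sum_{\sigma\,:\,\sigma(k)=j}\alpha_{\sigma}=p_{kj}\,.
\]
This is exactly the statement that the $k$--th row of $\sum_{\sigma}\alpha_{\sigma}L^{\sigma}$ equals $p_k$, so $\sum_{\sigma}\alpha_{\sigma}L^{\sigma}=P$. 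Discarding the finitely many $\sigma$ with $\alpha_{\sigma}=0$ then exhibits $P$ as the pLUT of a genuine, finite stochastic mixture.

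The only step needing care is the marginalization identity $\sum_{\sigma:\sigma(k)=j}\alpha_{\sigma}=p_{kj}$; it is a coordinate-wise use of the fact that each row $p_k$, lying in the simplex $\mathcal{S}_c$, has entries summing to one, telescoped across the other rows. If instead one follows the Krein--Milman route sketched before the theorem, the genuine obstacle is proving that the extreme points of $\mathcal{S}_c^{N^R}$ are exactly $\mathcal{S}^{N^R}$: this reduces to the general fact that the vertices of a Cartesian product of polytopes are the products of the vertices of the factors, together with the observation that the only extreme points of the simplex $\mathcal{S}_c$ are the base vectors $e_1,\dotsc,e_N$. Either way the argument needs no deep machinery, and the product construction has the bonus of bounding the number of required deterministic components by $N^{N^R}$.
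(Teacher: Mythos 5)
Your proof is correct, but it follows a genuinely different route from the paper. The paper proves the theorem in one stroke by convex geometry: it observes that $\mathcal{S}_c^{N^R}$ is compact and convex with extreme point set $\mathcal{S}^{N^R}$, invokes the Krein--Milman theorem, and identifies the convex hull of $\mathcal{S}^{N^R}$ with the set of all stochastic mixtures via Fact~\ref{fac:seven}; it then supplements this with a constructive greedy algorithm (Proposition~\ref{prop:stoch-mix-decom}), which repeatedly subtracts $\alpha_m L^m$ with $\alpha_m=\min_i\max_j P^{m-1}_{ij}$ and which by itself also proves the theorem. You instead build the decomposition explicitly as a product measure over choice functions $\sigma$, with weights $\alpha_\sigma=\prod_{k=1}^{N^R}p_{k,\sigma(k)}$, and your three checks (nonnegativity, $\sum_\sigma\alpha_\sigma=\prod_k\sum_j p_{kj}=1$, and the marginalization $\sum_{\sigma:\sigma(k)=j}\alpha_\sigma=p_{kj}$ by factoring out $p_{kj}$ and telescoping the remaining row sums) are all sound, so $\sum_\sigma\alpha_\sigma L^\sigma=P$ as required. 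Your construction buys elementarity and a transparent probabilistic meaning --- it realizes the SCA as sampling each LUT entry's output independently --- and it sidesteps the one gap in the paper's abstract route that you correctly flag: Krein--Milman as usually stated gives only the \emph{closed} convex hull, so the finite representation really rests on the finite-dimensional Minkowski/Carath\'eodory refinement, or on identifying the extreme points of the product polytope, neither of which the paper spells out. What the paper's greedy algorithm buys in exchange is economy and diagnostic value: it uses at most $N\,N^R$ components (each step zeroes at least one entry of $P^m$), whereas your product decomposition can involve up to $N^{N^R}$ terms with individually tiny weights, and by Proposition~\ref{prop:motiv} the greedy scheme achieves the largest possible leading coefficient $\alpha_1=\max_p\alpha_p$, thereby exposing the most influential deterministic component --- a property the paper exploits in its experiments and that your decomposition does not provide, though it is not needed for the theorem as stated.
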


The decomposition of a stochastic mixture into a convex combination described above is not unique, therefore many methods for constructing the decomposition can be formulated. In the proof of Proposition \ref{prop:stoch-mix-decom}, we introduce one possible algorithm of which it will be shown in Proposition \ref{prop:motiv} that it uncovers the most influential component of the stochastic mixture. 

\begin{proposition}
Let $P\in\mathcal{S}_c^{N^R}$. Then there exists an integer $n^\ast$ such that for $i=1,2,\dotsc,n^\ast$, we can define coefficients $\alpha_i\in [0,1]$, $\sum_{i=1}^{n^\ast} \alpha_i = 1$ 
and matrices $L^i \in \mathcal{S}^{N^R}$, that satisfy:
\begin{equation}
P = \sum_{i=1}^{n^\ast} \alpha_i\,L^i\,.
\end{equation}
\label{prop:stoch-mix-decom}
\end{proposition}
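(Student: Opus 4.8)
The plan is to give an explicit, finite algorithm rather than to invoke the Krein--Milman argument abstractly: I will greedily peel off deterministic LUTs, at each stage extracting the one that agrees with the current residual on its most probable state. Write $P = (p_1,\dots,p_{N^R})$ with $p_k = (p_{k1},\dots,p_{kN})$, and set $P^{(1)} = P$ together with a \emph{remaining mass} $r_1 = 1$, the common value of all the coordinate sums $\sum_j p^{(1)}_{kj}$. At stage $i$, given a nonnegative tuple $P^{(i)} = (p^{(i)}_1,\dots,p^{(i)}_{N^R})$ whose entries each sum to $r_i$, I would select for every $k$ an index $j_k^{(i)} \in \argmax_j p^{(i)}_{kj}$, assemble the deterministic LUT $L^i \in \mathcal{S}^{N^R}$ whose $k$-th entry is $e_{j_k^{(i)}}$, and set $\alpha_i = \min_k p^{(i)}_{k,j_k^{(i)}}$, the smallest of the selected maximal probabilities. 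If each $p^{(i)}_k$ already has a single nonzero coordinate, then $P^{(i)} = r_i L^i$; I stop with $n^* = i$ and $\alpha_{n^*} = r_i$. Otherwise I subtract, putting $P^{(i+1)} = P^{(i)} - \alpha_i L^i$ and $r_{i+1} = r_i - \alpha_i$.

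I would then verify the bookkeeping. Since each $p^{(i)}_k$ sums to $r_i > 0$, its maximum is at least $r_i / N > 0$, so every $\alpha_i > 0$; and since $\alpha_i$ is the minimum of the selected entries, subtracting $\alpha_i$ from each of them keeps $P^{(i+1)}$ nonnegative. Because every entry of $L^i$ is a basis vector summing to $1$, subtracting $\alpha_i L^i$ lowers each coordinate sum by exactly $\alpha_i$; hence all coordinate sums of $P^{(i+1)}$ equal $r_{i+1}$, and the ``tuple of distributions up to a common scale'' structure is preserved without any renormalization. Telescoping the recursion $r_{i+1} = r_i - \alpha_i$ from $r_1 = 1$, together with the terminal choice $\alpha_{n^*} = r_{n^*}$, yields $\sum_{i=1}^{n^*}\alpha_i = 1$, while unwinding the subtractions gives $P = \sum_{i=1}^{n^*}\alpha_i L^i$, which is the claimed decomposition with $\alpha_i \in [0,1]$ and $L^i \in \mathcal{S}^{N^R}$.

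The crux --- and the step I expect to be the main obstacle --- is proving that the loop terminates after finitely many stages. The key observation I would isolate is that the subtraction only touches the coordinates $(k, j_k^{(i)})$, which are strictly positive, so no zero coordinate is ever revived; moreover, on every index $k$ attaining the minimum that defines $\alpha_i$, the selected coordinate drops exactly to $0$. Consequently the total number of nonzero coordinates of $P^{(i)}$ strictly decreases at each stage. This count is a nonnegative integer bounded above by $N\cdot N^R$ and bounded below by $N^R$ (each $p^{(i)}_k$ keeps at least one nonzero coordinate, as its sum $r_i$ stays positive), so the process reaches the state in which every $p^{(i)}_k$ has a single nonzero coordinate and halts within at most $(N-1)N^R$ subtractions, giving an explicit bound on $n^*$. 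Finiteness is therefore a monovariant argument on the support of $P^{(i)}$, after which the remaining nonnegativity, sum, and telescoping checks above are routine. The fact that this particular greedy choice extracts the most influential deterministic component is exactly what is taken up afterwards in Proposition~\ref{prop:motiv}.
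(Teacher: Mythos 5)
Your proposal is correct and is essentially the paper's own proof: the same greedy algorithm (per-configuration argmax, coefficient equal to the minimum of the selected maxima, subtraction, telescoping), with termination established by the same monovariant --- you count nonzero entries strictly decreasing where the paper counts zero entries strictly increasing, which is the identical argument, and your early stop when every row is a single scaled basis vector coincides with the paper's final step reaching $P^{n^\ast}=\mathbf{0}$. The only addition beyond the paper is your explicit bound $n^\ast \leq (N-1)\,N^R+1$, which the paper does not state but which follows immediately from its own zero-counting argument.
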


\begin{proof}
We start by defining auxiliary matrices $P^m$, for $m\geq 0$. Let:
\begin{equation}
P^m = \begin{cases}
P &, \textrm{if }m=0, \\
P^{m-1} - \alpha_m\,L^m &, \text{if }m>0,
\end{cases} 
\label{eq:pm-mat}
\end{equation}
where, for $m>0$:
\begin{equation}
\alpha_m = \min_i\max_j P^{m-1}_{ij}\,,\label{eq:alpha-def}
\end{equation}
and $L^m = (L^m_{ij})$ such that:
\begin{equation}
L^m_{ij} =
\begin{cases} 
1, & \textrm{for }j = \min \Big\{j \mid P^{m-1}_{il} \leq P^{m-1}_{ij}, \textrm{for any $l\in\{1,\dotsc,N\}$}\Big\}\,, \\
0, & \textrm{otherwise}.
\end{cases}
\label{eq:lm-mat}
\end{equation}
Note that, for any $m$ and $i$, it holds that $L^m_{ij} = 1$ for a single value of $j$. Consequently $L^m \in \mathcal{S}^{N^R}$.

Firstly, note that $\alpha_1$ is one of the elements of $P^0=P$, so $\alpha_1\in [0,1]$. Additionally, it holds that $\alpha_1>0$, since if $\alpha_1=0$, then for some $i$, we would have $\max_j P_{ij}=0$, which contradicts the fact that $\sum_j P_{ij} = 1$ for any~$i$.

Secondly, note that for any $i$ and $j$, it holds that $P^0_{ij}\geq 0$. We will show next that the assumption $P^{m-1}_{ij}\geq 0$ leads to $P^{m}_{ij}\geq 0$. Indeed, we know that $P^m_{ij} \in \{ P^{m-1}_{ij}, P^{m-1}_{ij} - \alpha_m \}$. If $P^{m}_{ij} = P^{m-1}_{ij}$, then obviously $P^{m}_{ij}\geq 0$, while if $P^{m}_{ij} = P^{m-1}_{ij}-\alpha_m$, it holds that $P^{m-1}_{ij}\geq P^{m-1}_{il}$ for any $l$, meaning that $P^{m-1}_{ij} = \max_l P^{m-1}_{il}$. On the other hand, since $\alpha_m = \min_i\max_l P^{m-1}_{il}$, it holds that $P^{m-1}_{ij}\geq \alpha_m$. Since it was assumed that $P^{m-1}_{ij}\geq 0$, we see that $P^m_{ij}\geq 0$ as well. 

Thirdly, following the argument above, we can easily show that for every $i$, $j$ and $m$, it holds that $P^{m}_{ij} \leq P^{m-1}_{ij}$. This means that for every $i$, it holds that $\max_j P^{m}_{ij} \leq \max_j P^{m-1}_{ij}$, and thus $\alpha_{m+1} \leq \alpha_{m}$. Since $\alpha_1\in\, ]0,1]$, and $\alpha_m\geq 0$ for any $m$, we see that $\alpha_m\in[0,1]$.

Note that for any $i$, it holds that $\sum_j P^m_{ij} = \left(\sum_j P^{m-1}_{ij}\right) - \alpha_m$. Expanding this recursively, yields $\sum_j P^m_{ij} = (\sum_j P^{0}_{ij}) - \sum_{l=1}^m\alpha_l$. From the definition of a stochastic matrix, we know that $\sum_j P^{0}_{ij} =1$, such that $\sum_j P^m_{ij} = 1 - \sum_{l=1}^m\alpha_l$, which means that the column sums in matrix $P^m$ are equal. Since for all $m$ we have shown that $P^m_{ij}\geq 0$, we know that for any $m$, it holds that $1 - \sum_{l=1}^{m}\alpha_l \geq 0$, therefore $\sum_{l=1}^{m}\alpha_l \leq 1$. 

We further note that $\alpha_m = 0$, for some $m$, if and only if $P^{m-1}$ contains one zero column. Since we have shown that the column sums in $P^m$ are equal for any $m$, we know that $\alpha_m=0$ if and only if $P^{m-1} = \mathbf{0}$. Therefore, $\alpha_m > 0$ if and only if $P^{m-1}\neq \mathbf{0}$, which implies that in each of the columns of $P^{m-1}$, there is at least one non-zero entry.

Let $z(m) = \#\{(i,j) \mid P^m_{ij} = 0\}$ denote the number of zeros in matrix $P^m$. Note that if $P^m\neq \mathbf{0}$, then $z(m) < z(m+1)$, which follows from the fact that if $P^m\neq \mathbf{0}$, there exists $P^m_{ij}=\alpha_{m+1}>0$, and thus for at least one couple $(i,j)$, we know that $P^{m+1}_{ij} = P^m_{ij} - \alpha_{m+1} = 0$. Since $z(m)$ cannot be greater than the total number of elements in $P^m$, it is not possible that $z(m) < z(m+1)$ for all~$m$. As such, we know that there exist $n^\ast$ for which $P^{n^\ast} = \mathbf{0}$ and $P^{n^\ast -1}\neq \mathbf{0}$, and thus $\sum_{l=1}^{n^\ast}\alpha_l = 1$. Using the definition of $P^{n^\ast}$ multiple times, we get:
\[ \mathbf{0} = P^{n^\ast} = P^{n^\ast -1} - \alpha_{n^\ast}\,L^{n^\ast} = \ldots = P^0 - \alpha_1\,L^1 - \ldots - \alpha_{n^\ast}\,L^{n\ast} = P - \sum_{i=1}^{n^\ast}\alpha_i\,L^i,\]
such that $P = \sum_{i=1}^{n^\ast}\alpha_i\,L^i$.
\end{proof}

To illustrate the construction presented in the proof of Proposition \ref{prop:stoch-mix-decom}, every step of the algorithm is visualized in the following example.

\begin{example}
Although Proposition \ref{prop:stoch-mix-decom} deals with stochastic matrices with dimensions corresponding to pLUTs, it is obvious that the same method can be used to decompose a stochastic matrix of arbitrary size. For simplicity, we apply the method to an exemplary $4\times 3$ stochastic matrix $P$ given by:
\[
P = \begin{bmatrix}
0.6 & 0 & 0.2 & 0.4 \\
0.3 & 1 & 0.3 & 0.4 \\
0.1 & 0 & 0.5 & 0.2 
\end{bmatrix}.
\]
According to Eqs.~(\ref{eq:pm-mat}) and (\ref{eq:lm-mat}), we build matrices $P^0$ and $L^1$:
\[
P^0 = P = \begin{bmatrix}
{\bf 0.6} & 0 & 0.2 & {\bf 0.4} \\
0.3 & {\bf 1} & 0.3 & 0.4 \\
0.1 & 0 & {\bf 0.5} & 0.2 
\end{bmatrix},
\quad
L^1 = \begin{bmatrix}
1 & 0 & 0 & 1 \\
0 & 1 & 0 & 0 \\
0 & 0 & 1 & 0 
\end{bmatrix}.
\]
Note that we highlighted the selected maximal elements in each of the columns using boldface font. The boldface entries in $P^{0}$ correspond to the positions of the ones in matrix $L^1$. Note that in the last column of $P^0$, there is no unique maximal element. Yet, as given by Eq.\ (\ref{eq:lm-mat}), we pick the first of the entries from top to bottom. Picking the other possibility would affect the subsequent matrices $L^m$ and result in a different decomposition, but the coefficients $\alpha_m$ would not change. 

Following Eq.\ (\ref{eq:alpha-def}), we find that $\alpha_1 = 0.4$. We proceed and calculate $P^1$ and~$L^2$:
\[
P^1 = P^0 - \alpha_1\,L^1 = \begin{bmatrix}
0.2 & 0 & 0.2 & 0 \\
{\bf 0.3} & {\bf 0.6} & {\bf 0.3} & {\bf 0.4} \\
0.1 & 0 & 0.1 & 0.2 
\end{bmatrix},
\quad
L^2 = \begin{bmatrix}
0 & 0 & 0 & 0 \\
1 & 1 & 1 & 1 \\
0 & 0 & 0 & 0 
\end{bmatrix},
\]
and from this we see that $\alpha_2 = 0.3$. We continue the procedure:
\[
P^2 = P^1 - \alpha_2\,L^2 = \begin{bmatrix}
{\bf 0.2} & 0 & {\bf 0.2} & 0 \\
0 & {\bf 0.3} & 0 & 0.1 \\
0.1 & 0 & 0.1 & {\bf 0.2} 
\end{bmatrix},
\quad
L^3 = \begin{bmatrix}
1 & 0 & 1 & 0 \\
0 & 1 & 0 & 0 \\
0 & 0 & 0 & 1 
\end{bmatrix},
\]
and $\alpha_3 = 0.2$. We proceed one more step and get:
\[
P^3 = P^2 - \alpha_3\,L^3 = \begin{bmatrix}
0 & 0 & 0 & 0 \\
0 & {\bf 0.1} & 0 & {\bf 0.1} \\
{\bf 0.1} & 0 & {\bf 0.1} & 0 
\end{bmatrix},
\quad
L^4 = \begin{bmatrix}
0 & 0 & 0 & 0 \\
0 & 1 & 0 & 1 \\
1 & 0 & 1 & 0 
\end{bmatrix},
\]
and $\alpha_4 = 0.1$. This is the final step, since: $P^4 =  P^3 - \alpha_4\,L^4 = {\bf 0}$. Therefore, the decomposition can be written as:
\begin{align*}
P = 0.4 &\times \begin{bmatrix}
1 & 0 & 0 & 1 \\
0 & 1 & 0 & 0 \\
0 & 0 & 1 & 0 
\end{bmatrix} + 0.3 \times \begin{bmatrix}
0 & 0 & 0 & 0 \\
1 & 1 & 1 & 1 \\
0 & 0 & 0 & 0 
\end{bmatrix} \\ + 0.2 &\times \begin{bmatrix}
1 & 0 & 1 & 0 \\
0 & 1 & 0 & 0 \\
0 & 0 & 0 & 1 
\end{bmatrix} + 0.1 \times \begin{bmatrix}
0 & 0 & 0 & 0 \\
0 & 1 & 0 & 1 \\
1 & 0 & 1 & 0 
\end{bmatrix}.\\[-\normalbaselineskip]\tag*{\qed}
\end{align*}
\end{example}

As mentioned earlier, the construction presented in the proof of Proposition~\ref{prop:stoch-mix-decom} is one of many possibilities of decomposing a pLUT, but in the following proposition we will show that it enables us to capture the element of the mixture with the highest possible coefficient. 

\begin{proposition}
Let $P\in\mathcal{S}_c^{N^R}$, $p=1,\dotsc, p_{max}$, $q=1,\dotsc, q_{max}$, and let $\alpha_p, \beta_q \in [0,1]$, $L^p, M^q \in \mathcal{S}^{N^R}$ be such that $\sum_p \alpha_p = \sum_q \beta_q = 1$. Moreover, let $P=\sum_p \alpha_p\,L^p  = \sum_q \beta_q\,M^q$, and let $\alpha_p$, $L^p$ be defined as in the proof of Proposition \ref{prop:stoch-mix-decom}.
It then holds that $\max_q \beta_q \leq \alpha_1 = \max_p \alpha_p$.
\label{prop:motiv}
\end{proposition}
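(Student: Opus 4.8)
The plan is to split the statement into its two parts and dispose of them separately. The equality $\alpha_1 = \max_p \alpha_p$ requires essentially no new work: in the proof of Proposition~\ref{prop:stoch-mix-decom} it was already established that $\alpha_{m+1}\leq\alpha_m$ for every $m$, so the sequence $(\alpha_p)$ produced by the algorithm is non-increasing and attains its maximum at $p=1$. Hence all the substance lies in the inequality $\max_q\beta_q\leq\alpha_1$, and I would concentrate the argument there.

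First I would recall from Eq.~(\ref{eq:alpha-def}) that $\alpha_1=\min_i\max_j P_{ij}$, where $i$ ranges over the $N^R$ columns (neighborhood configurations) and $j$ over the $N$ rows (states), consistent with the transposed index convention used in the proof of Proposition~\ref{prop:stoch-mix-decom}, in which $\sum_j P_{ij}=1$ for every $i$. The one structural feature I would exploit is that each $M^q$ is the LUT of a deterministic CA, so $M^q\in\mathcal{S}^{N^R}$ and therefore every one of its $N^R$ columns is a standard basis vector, i.e.\ contains exactly one entry equal to $1$ and zeros elsewhere.

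Next I would fix an arbitrary index $q$ together with an arbitrary column $i$, and let $j_i$ denote the unique row for which $M^q_{i j_i}=1$. Because $P=\sum_q\beta_q M^q$ is a convex combination and all its terms are non-negative, I would isolate the $q$-th summand to bound the selected entry of $P$ from below:
\[ P_{i j_i} = \sum_{q'} \beta_{q'}\,M^{q'}_{i j_i} \geq \beta_q\,M^q_{i j_i} = \beta_q\,. \]
Consequently $\max_j P_{ij}\geq P_{i j_i}\geq\beta_q$ holds for every column $i$, and taking the minimum over $i$ yields $\alpha_1=\min_i\max_j P_{ij}\geq\beta_q$. Since $q$ was arbitrary, this gives $\max_q\beta_q\leq\alpha_1$, which together with the first paragraph completes the proof.

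I do not anticipate a genuine obstacle: the whole argument reduces to the non-negativity of the convex combination combined with the fact that a deterministic LUT picks exactly one row in each column. The only point that needs care is the bookkeeping — choosing $j_i$ as a function of \emph{both} $q$ and the column $i$ so that the lower bound $\max_j P_{ij}\geq\beta_q$ holds uniformly across all columns before the minimum-over-$i$ step is applied, and respecting the transposed convention (columns indexed by $i$, rows by $j$) inherited from Eq.~(\ref{eq:alpha-def}).
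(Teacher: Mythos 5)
Your proposal is correct and takes essentially the same approach as the paper: both arguments hinge on the fact that each column of a deterministic LUT $M^q$ is a standard basis vector, so non-negativity of the convex combination $P=\sum_q \beta_q\,M^q$ yields $\beta_q \leq P_{i j_i}$, which is then combined with $\alpha_1=\min_i\max_j P_{ij}$. The only cosmetic differences are quantifier order --- the paper fixes the minimizing column at the outset and bounds $\beta_q \leq P_{il}\leq\alpha_1$ there, while you bound every column's maximum from below before taking the minimum --- and that you spell out $\alpha_1=\max_p\alpha_p$ via the monotonicity $\alpha_{m+1}\leq\alpha_m$, a step the paper leaves implicit.
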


\begin{proof}
Let $i$ and $j$ be such that $P_{ij} = \alpha_1$. Then for any $k$, it holds that $P_{ik}\leq \alpha_1$. Let $q\in\{1,\dotsc,q_{max}\}$ and let $l$ be such that $M^q_{il} = 1$. Since $P~=~ \sum_q \beta_q\,M^q$, we know that $\beta_q \leq P_{il}$, and thus $\beta_q \leq P_{il} \leq \alpha_1$. Therefore, for any $q~\in~\{1,\dotsc,q_{max}\}$, it holds that $\beta_q \leq \alpha_1$ and thus $\max_q \beta_q \leq \alpha_1$.
\end{proof}

The informal meaning of the above proposition is that the presented approach to decompose an SCA uncovers the deterministic rule that has the highest probability of being executed, and therefore is likely to have the highest influence on the behavior of the system. The exact relations between the dynamical behavior of SCAs and the dynamical characteristics of the components of stochastic mixtures are still under investigation. 

\hl{The experiments in the following section underline that in some cases those CAs that have the highest probability of application indeed greatly influence the behavior of the SCA. Yet, in some cases, those with a very small probability of application can also play an important role. Therefore, at this stage we do not claim that the overall dynamics of the SCA is always determined by the component of the mixture with the highest probability of application.

The decomposition of an SCA as a stochastic mixture might also find its use in SCA-based modeling. Indeed, in some cases it might be easier to express the model in terms of deterministic components having a meaning in the language relevant to the modeling task. Due to the presented equivalence, we are guaranteed that such a practice will not limit the modeling potential.}    

\section{Experiments}
\label{sec:experiments}

\subsection{Analysis of $\alpha$--asynchronous ECAs}
\label{sec:aaca}

Through the following experiment, we will analyze the behavior of $\alpha$-asyn\-chro\-nous ECAs. For a given rule, the cell-wise distance between the space-time diagram of the deterministic CA and the space--time diagram of a CCA representation of an $\alpha$--ACA variant, for $\alpha$ ranging from $0$ to $1$, was measured. More formally, if $A$ is an ECA rule, and $A_\alpha$ is the CCA representation of the $\alpha$--asynchronous variant of $A$ for $\alpha\in[0,1]$, we measured the distance $D(\alpha)$ between space-time diagrams.
For a random initial condition $I_0\in \{0,1\}^M$, where $M>0$ denotes the number of cells and $T$ the number of time steps, the distance $D$ is defined as:
\begin{equation}
D(\alpha) = \frac{1}{M\,T} \sum_{t=1}^{T} \norm{A^t(I_0) - A^t_\alpha(I_0)}\,,
\end{equation}
where $A^t(I_0)$ denotes the result of applying the global rule $A$ $t$ times to input~$I_0$. In this experiment, we choose $M=T=69$.

Unsurprisingly, it turned out that the most pronounced discrepancies between the parallel evolutions are observed when $\alpha$ approaches 1. For that reason, we restrict the further discussion to $\alpha\in[0.9,1]$. We classified the ECAs according to the behavior of the function $D$ (the contents of the classes is shown in Table~\ref{tab:ex1-classes}):
\begin{itemize}
\item Class I: $D(\alpha)$ is almost 0, for all $\alpha\in[0.9,1]$ (Figure~\ref{fig:c1}),
\item Class II: $D(\alpha)$ smoothly decreases towards 0, as $\alpha$ increases (Figure~\ref{fig:c2}),
\item Class IIIa: there is a very sudden drop of $D(\alpha)$ as $\alpha$ approaches 1 (Figure~\ref{fig:c3a}),
\item Class IIIb: there is a very sudden drop of $D(\alpha)$ as $\alpha$ approaches 1, and the behavior is not monotonic (Figure~\ref{fig:c3b}). 
\end{itemize}

\begin{figure}
\centering
\subfloat[Class I (ECA 40)]{\label{fig:c1}\includegraphics[width=0.4\textwidth]{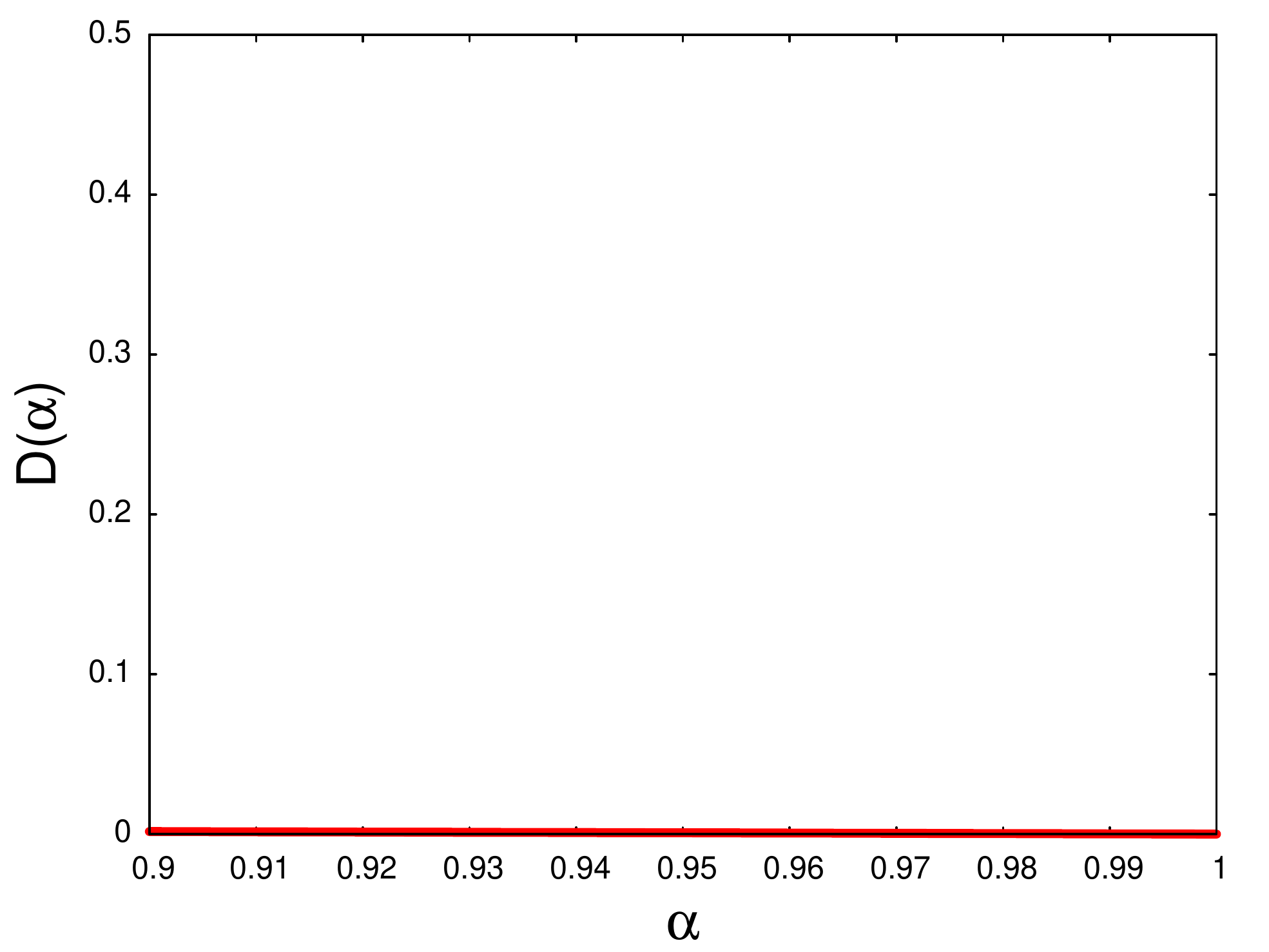}}\ 
\subfloat[Class II (ECA 42)]{\label{fig:c2}\includegraphics[width=0.4\textwidth]{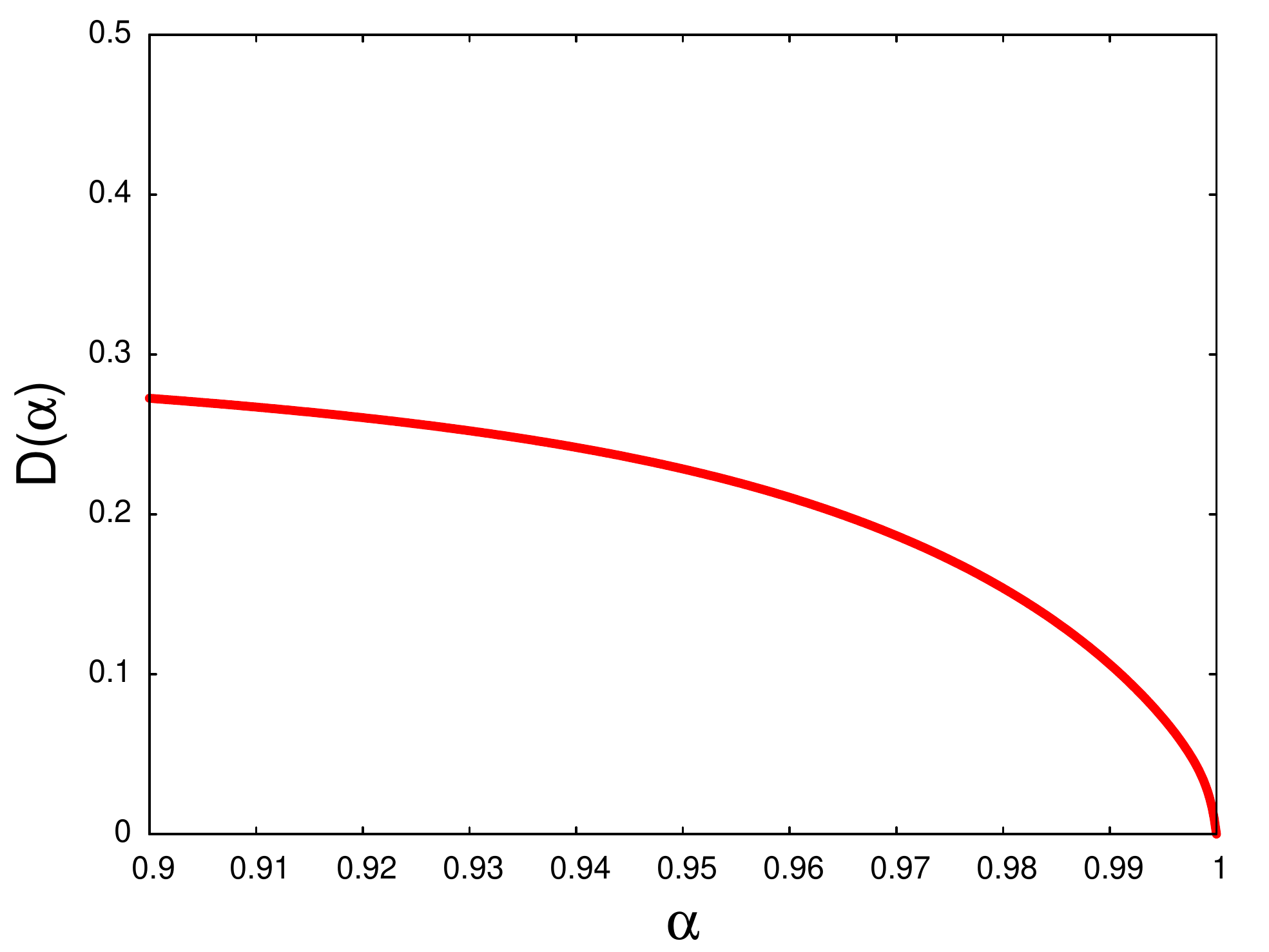}}\ 
\subfloat[Class IIIa (ECA 38)]{\label{fig:c3a}\includegraphics[width=0.4\textwidth]{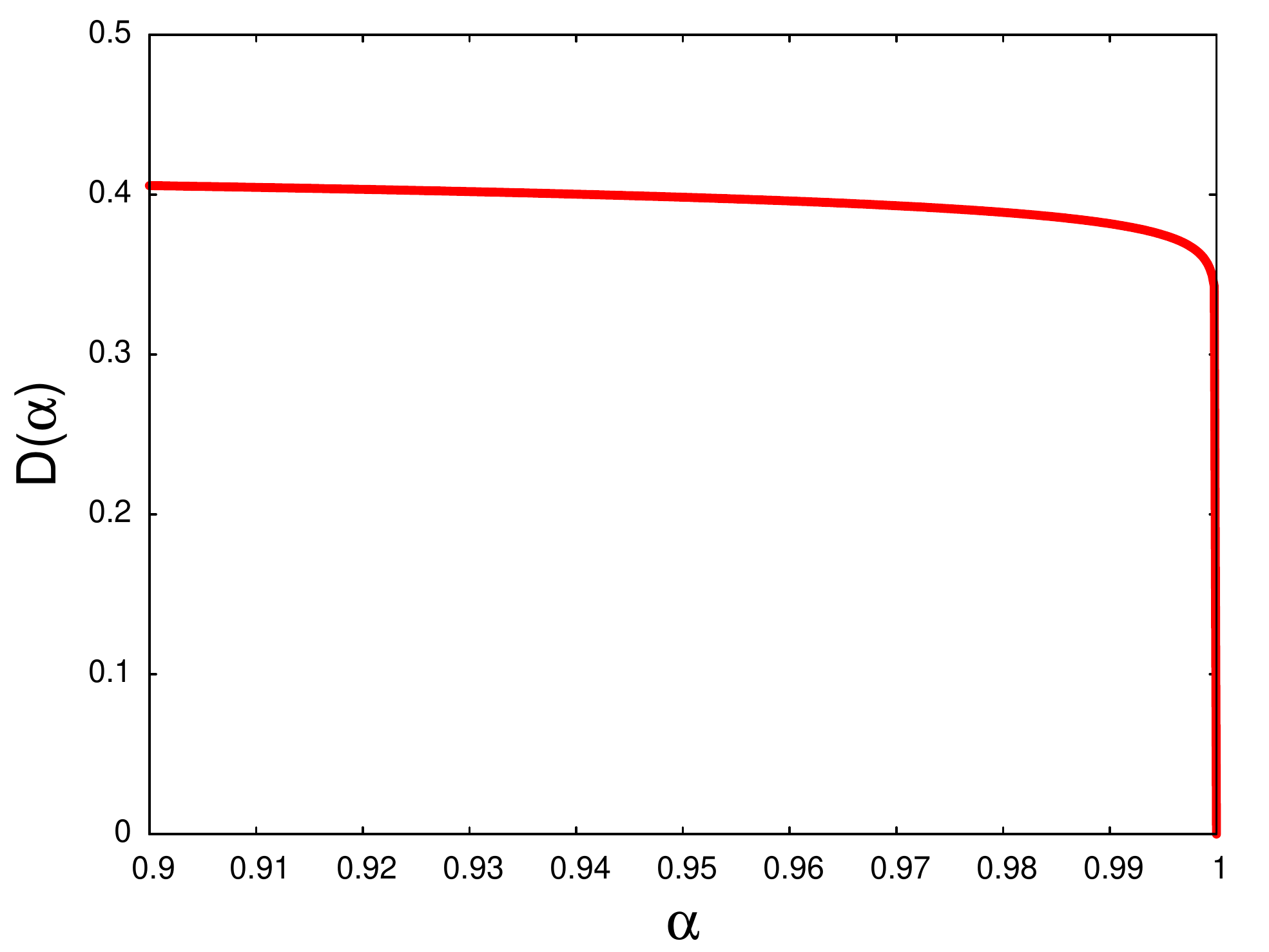}}\ 
\subfloat[Class IIIb (ECA 43)]{\label{fig:c3b}\includegraphics[width=0.4\textwidth]{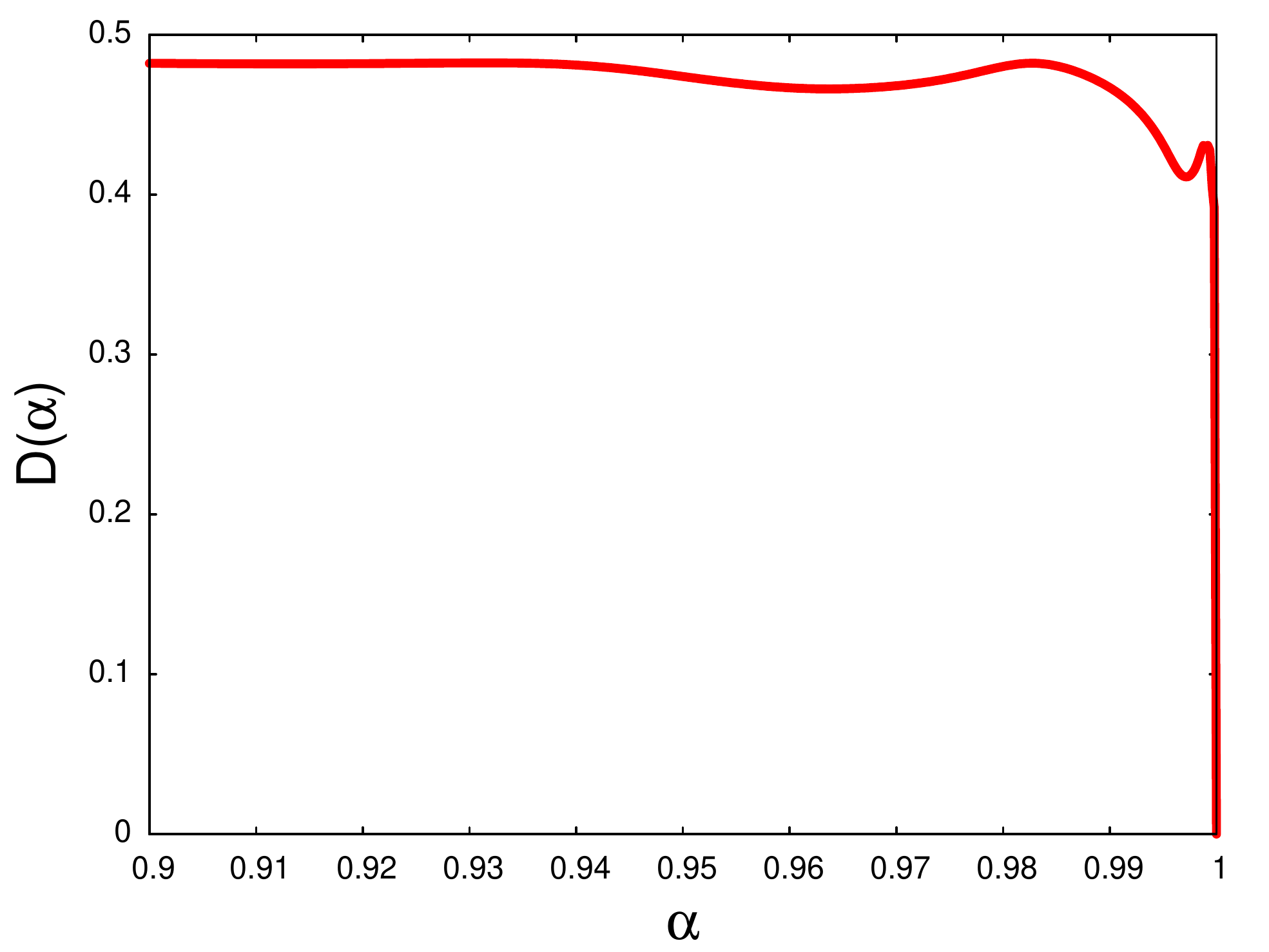}}\ 
\caption{Plots of $D(\alpha)$ for representatives of each of the classes defined in Table \ref{tab:ex1-classes}.}\label{pic:ex1}
\end{figure}

\begin{table}
\centering
\begin{tabular}{r|l}
\hline
{\bf Class I} &
0, 8, 12, 32, 40, 64, 68, 72, 76, 77, 93, 96, 128, 132,\\ &  136, 140, 160, 168, 192, 196, 200, 205--207, 220, 221, \\ & 224, 232, 233, 235--239, 249--255 \\ \hline
{\bf Class II} &
1--5, 7, 10, 13, 15--17, 19, 21, 23, 24, 29, 31, 34, 36,\\ & 42, 44, 48, 50,  51, 55, 56, 63, 66, 69, 71, 79, 80, 85, \\ & 87, 92, 95, 100, 104, 108, 112, 119, 127, 130, 138,\\ & 141, 144, 152, 162, 164, 170--172, 174--176, 178, 179, \\ & 186--191, 194, 197, 201--203, 208, 216--219, 222, 223, \\ & 226, 228, 230, 231, 234, 240--248 \\ \hline
{\bf Class IIIa} &
6, 9, 18, 20, 22, 25--28, 30, 33, 35, 37--39, 41, 45, 46, \\ & 49, 52--54, 57--62, 65, 67, 70, 73--75, 78, 82, 83, 86,\\ &  88--91, 94, 97--99, 101--103, 105--107, 109--111, 114--116, \\ &118, 120--126, 129, 131, 133--135, 137, 139, 145--151, \\ & 153--159, 161, 163, 165--167, 169, 173, 177, 180--185, \\ &193, 195, 198, 199,204, 209--211, 214, 215, 225, 227, 229 \\ \hline
{\bf Class IIIb} &
11, 14, 43, 47, 81, 84, 113, 117, 142, 143, 212, 213 \\ \hline
\end{tabular}
\caption{Classes of $\alpha$--Asynchronous ECAs.}
\label{tab:ex1-classes}
\end{table}

The interpretation of the classes is as follows. Rules belonging to Class~I are resistant to $\alpha$--asynchronicity, which means that introducing the asynchronicity aspect does not affect the behavior, or that its impact is negligible. Rules belonging to Class II are affected by the asynchronicity, but the impact on the final behavior is proportional to the synchrony rate $\alpha$, {\it i.e.}\ the behavior steadily approaches the deterministic case as $\alpha$ approaches one. Rules belonging to Classes~IIIa and IIIb are sensitive to $\alpha$--asynchronicity, meaning that the behavior of the system changes drastically as soon as $\alpha$ is smaller than 1. Class~IIIb can be distinguished from Class IIIa, by the noisy behavior of $D(\alpha)$ for $\alpha$ close to one. The cause of the differences is not yet uncovered.

\hl{Classes IIIa and IIIb illustrate the case where the decomposition as a stochastic mixture does not suffice to understand the dynamics of the SCA. Indeed, for large $\alpha$, the behavior of the SCA and that of the deterministic CA with the highest probability of application is quite different.}

\subsection{Stochastic density classifiers}
We consider the SCA defined by the local rule $\mathbf{C}_3$ introduced in \cite{fates:inria-00608485,nazim-automata2013}. This rule is defined by the pLUT shown in Table \ref{tab:plut-nazim}. Note that this pLUT uses the standard notation for its state set $\mathcal{S}=\{0,1\}$, and its entries constitute the probabilities of reaching state $1$. It was shown both analytically and experimentally that this rule is a stochastic solution for the classical density classification problem (DCP) \cite{fates:inria-00608485} with arbitrary precision, {\it i.e.}\ by varying the parameter $\eta$, we can achieve any probability $p<1$ of successful classification. The DCP was introduced in \cite{gacs1978one,packard1988adaptation}. 

\renewcommand{\tabcolsep}{.1cm}
\begin{table}[ht]
\centering
\begin{tabular}{>{$}c<{$}|>{$}c<{$}|>{$}c<{$}|>{$}c<{$}|>{$}c<{$}|>{$}c<{$}|>{$}c<{$}|>{$}c<{$}}
\hline
(1,1,1) & (1,1,0) & (1,0,1) & (1,0,0) & (0,1,1) & (0,1,0) & (0,0,1) & (0,0,0) \\
\hline
1 & \eta & 1 & 1-\eta & 1 & 0 & 0 & 0 \\
\hline
\end{tabular}
\caption{pLUT of local rule $\mathbf{C}_3$ \cite{fates:inria-00608485,nazim-automata2013}.}\label{tab:plut-nazim}
\end{table}

Using the CCA representation, we can visualize the average behavior of~$\mathbf{C}_3$. In Figure \ref{pic:c3}, the first 80 rows of the averaged space-time diagram (time goes from top to bottom) obtained using the CCA representation of $\mathbf{C}_3$ with $\eta=0.1$ are shown, together with three typical space-time diagrams obtained by direct evaluation of the SCA rule $\mathbf{C}_3$. All images were obtained using the same initial condition involving 29 cells out of which 16 were black (state 1). As can be inferred from Figure \ref{pic:c3}, two samples lead to behavior that is similar to the one displayed by the average space-time diagram, while the one depicted in Figure \ref{fig:fail} behaves differently, and leads to a wrong classification. Since the probability of obtaining a correct classification for the initial condition at stake was estimated to be 90.1\% on the basis of 10000 repetitions, the impact of the erroneous cases on the averaged image is hardly visible.

\begin{figure}
\centering
\subfloat[Averaged]{\includegraphics[width=0.19\textwidth,fbox]{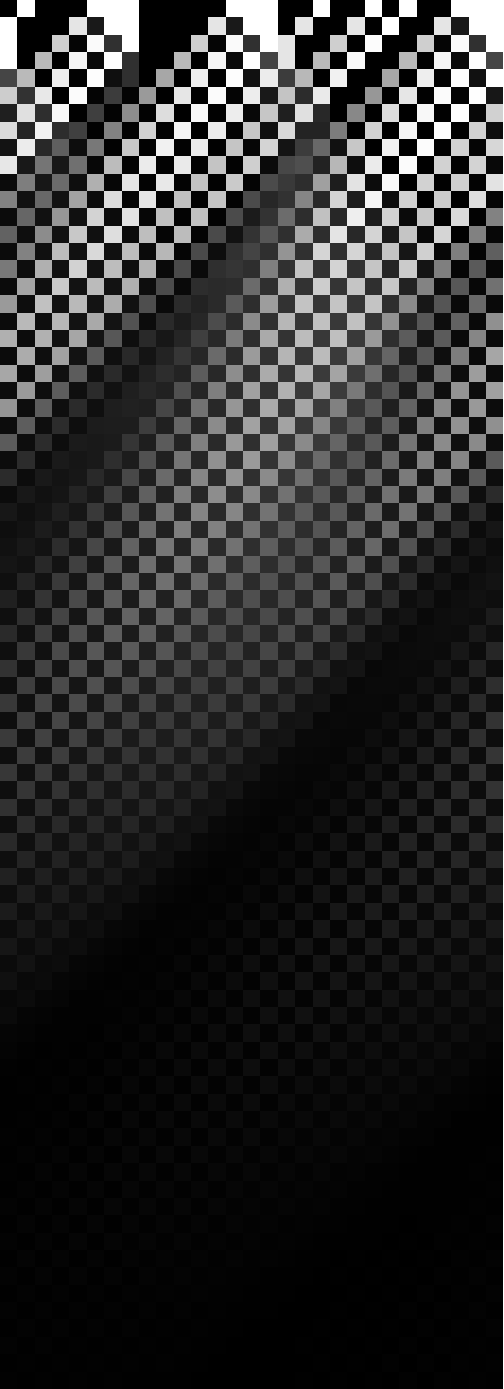}}\quad
\subfloat[Sample \#1]{\includegraphics[width=0.19\textwidth,fbox]{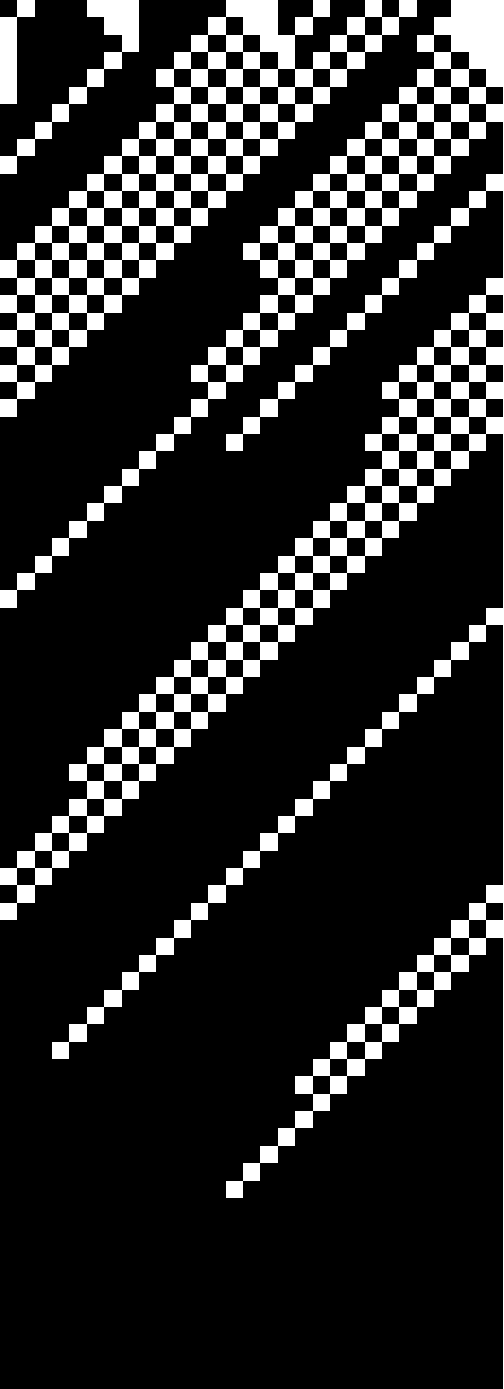}}\quad
\subfloat[Sample \#2]{\includegraphics[width=0.19\textwidth,fbox]{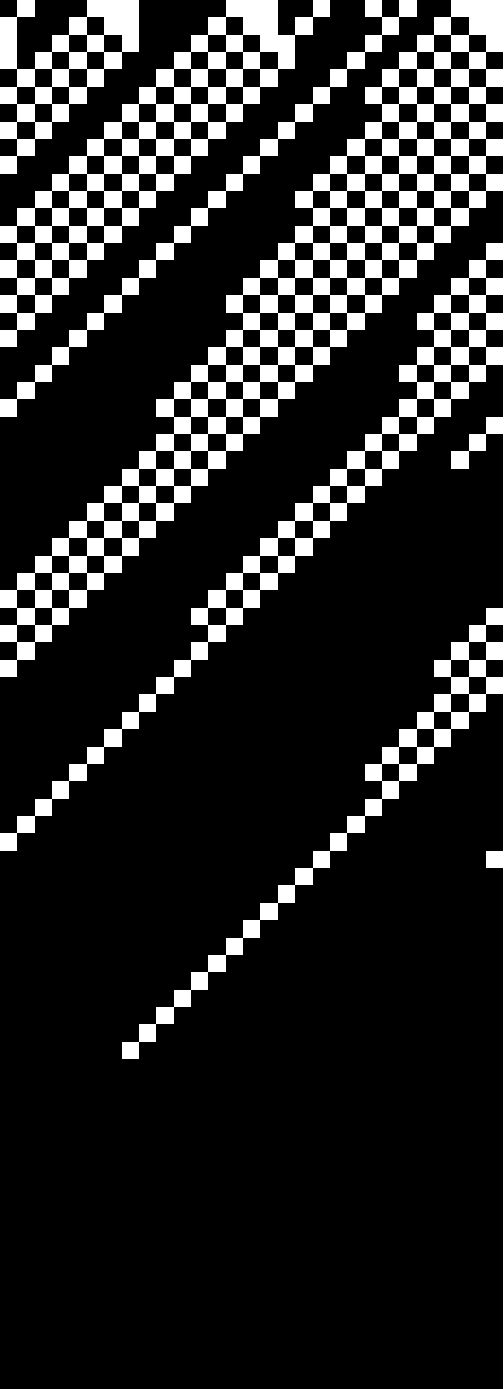}}\quad
\subfloat[Failing sample]{\label{fig:fail}\includegraphics[width=0.19\textwidth,fbox]{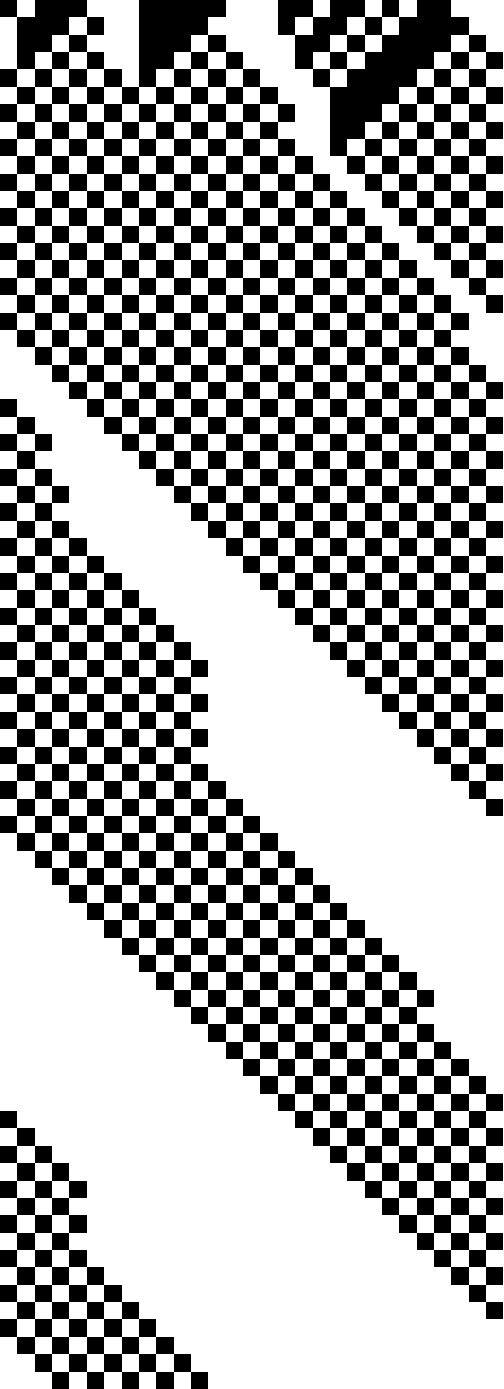}}\quad
\caption{Space time diagrams of $\mathbf{C}_3$ with $\eta = 0.1$.}\label{pic:c3}
\end{figure}

One of the most interesting questions within the analysis of DCP solutions relates to the expected time of convergence towards the outcome of the classification and how this relates to the number of cells. Figure \ref{pic:c3-conv-time} depicts the average convergence time calculated over an ensemble of 5000 random initial conditions for different numbers of cells and values of $\eta$, both for the CCA and the SCA representation. The initial conditions were generated as follows. For every initial condition independently, a probability $p$ was selected randomly, and then the initial states were selected with $p$ being the probability of selecting state 0 at each cell independently. Such a selection procedure assures that each of the possible densities has the same probability in the ensemble of initial conditions, and thus we can evaluate the classifier across a diverse set of densities.

In the case of the CCA representation, we cannot expect reaching a truly homogeneous, global state. Therefore we evolved it until the maximum, absolute difference between the states of any two cells was lower than 0.001. Then, we verified whether the average of the states was closer to 1 or 0. In the case of the SCA, for each of the 5000 initial conditions, 100 simulations were performed, and the results were averaged. 

\begin{figure}
\centering
\subfloat[CCA]{\includegraphics[width=0.45\textwidth]{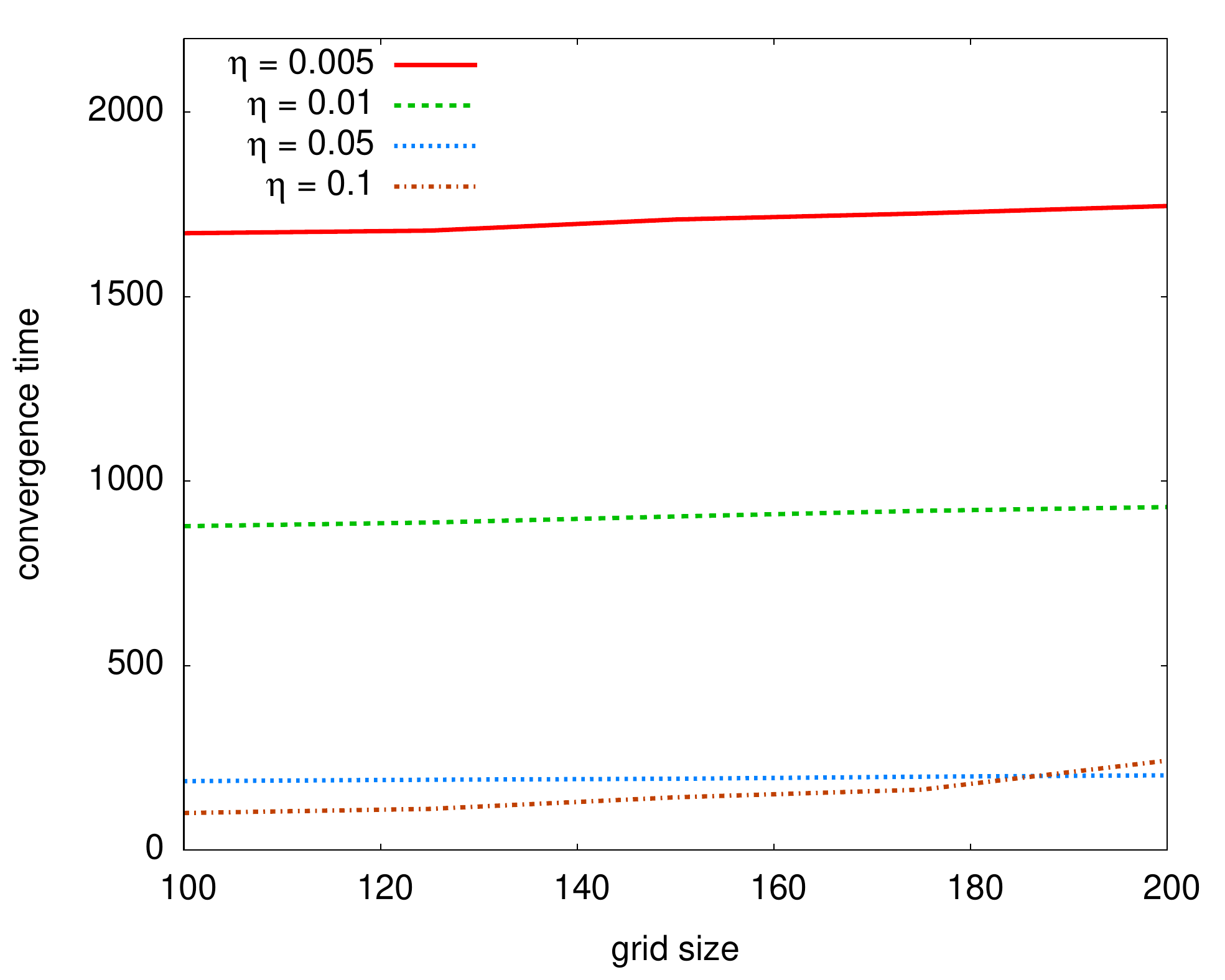}}\quad
\subfloat[SCA]{\includegraphics[width=0.45\textwidth]{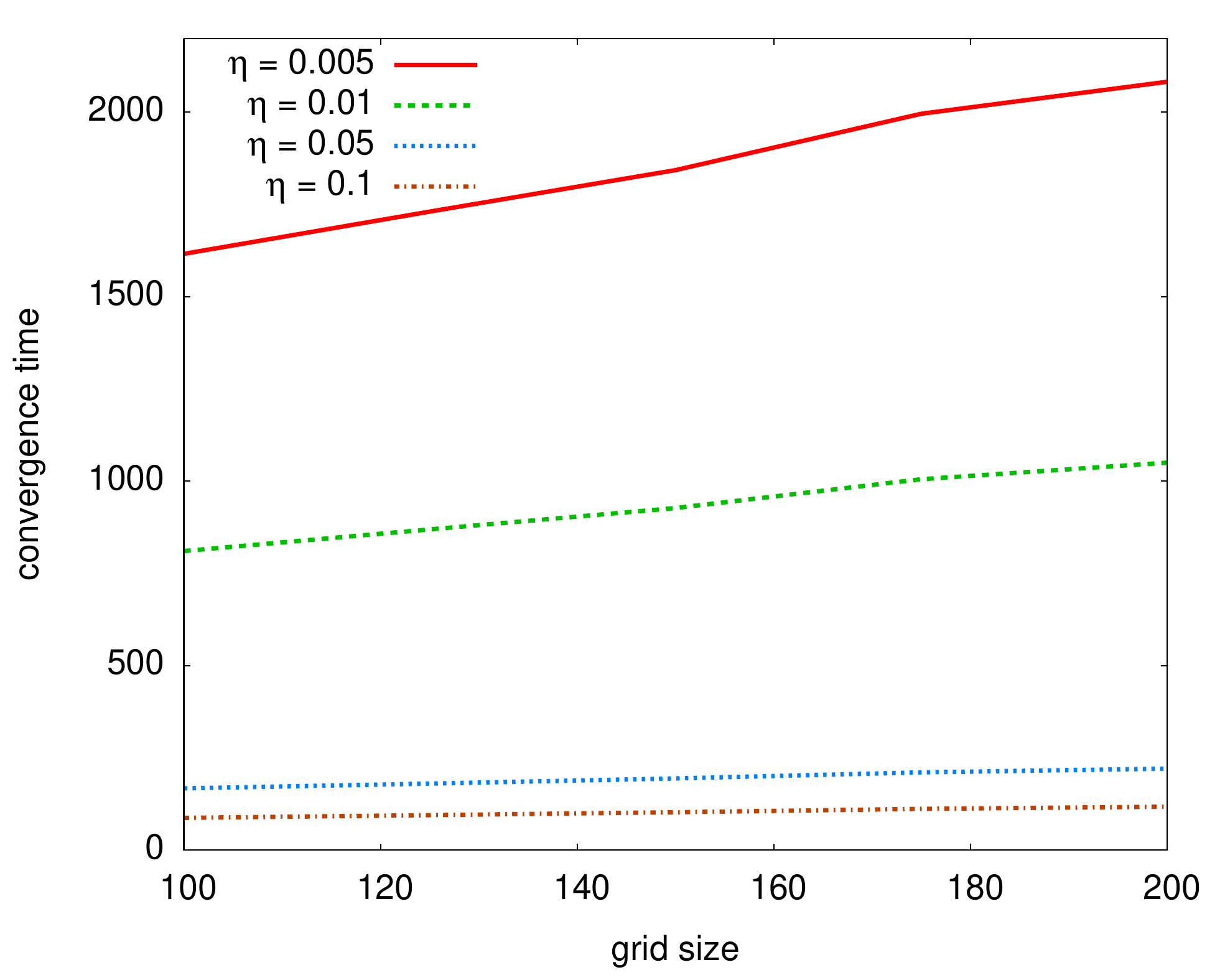}}
\caption{Expected convergence time of rule $\mathbf{C}_3$ for different values of $\eta$.}\label{pic:c3-conv-time}
\end{figure}

The results for small grid sizes (smaller than 200) are similar for both representations and agree with the findings presented in \cite{fates:inria-00608485,nazim-automata2013}. Yet, the computing time when using the CCA representation is significantly lower than in the case of SCAs. This indicates that the CCA representation of SCAs proposed in this paper might enable a substantial reduction of the required computing time. Moreover, using the CCA representation, we are able to get more insight into the behavior of the dynamical system by plotting the evolution of the density over time. Figure \ref{pic:c3-conv-time2} shows the results of such an experiment. For the sake of clarity, these results are based on an ensemble of 400 initial conditions (69 cells) out of which 200 had a density greater than 0.5 (green), while the other 200 had a density smaller than 0.5 (red). The ensemble of initial configurations was the same for all values of $\eta$. Note that the range of the horizontal axis differs across plots. From Figure \ref{pic:c3-conv-time2}, we can infer that the time and quality of classification increases with decreasing $\eta$. Due to the computational costs involved in evolving and averaging SCAs directly, drawing and analyzing such plots has become possible only due to the introduction of CCAs.

\begin{figure}
\centering
\subfloat[$\eta=0.1$]{\includegraphics[width=0.48\textwidth]{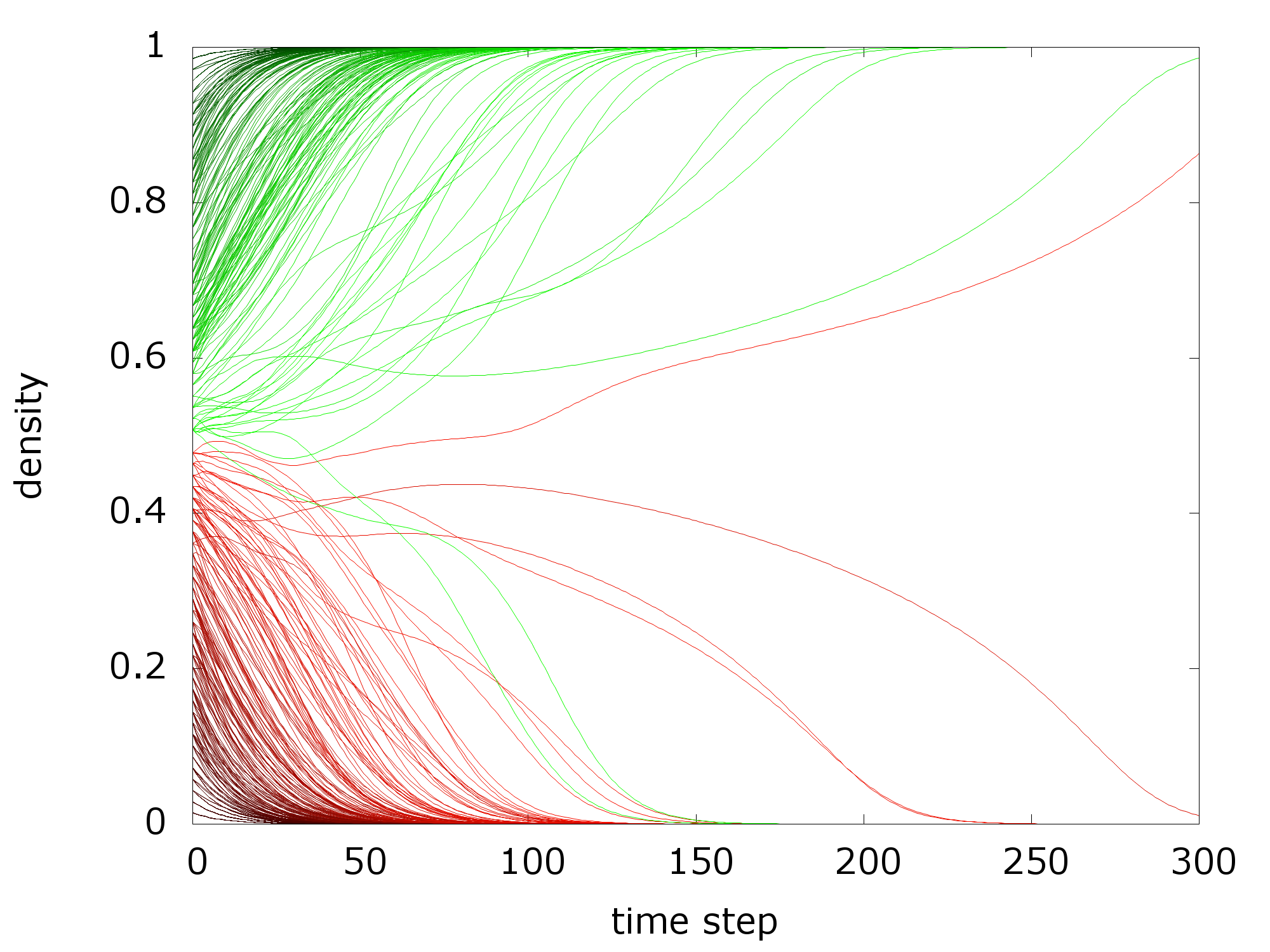}}
\subfloat[$\eta=0.05$]{\includegraphics[width=0.48\textwidth]{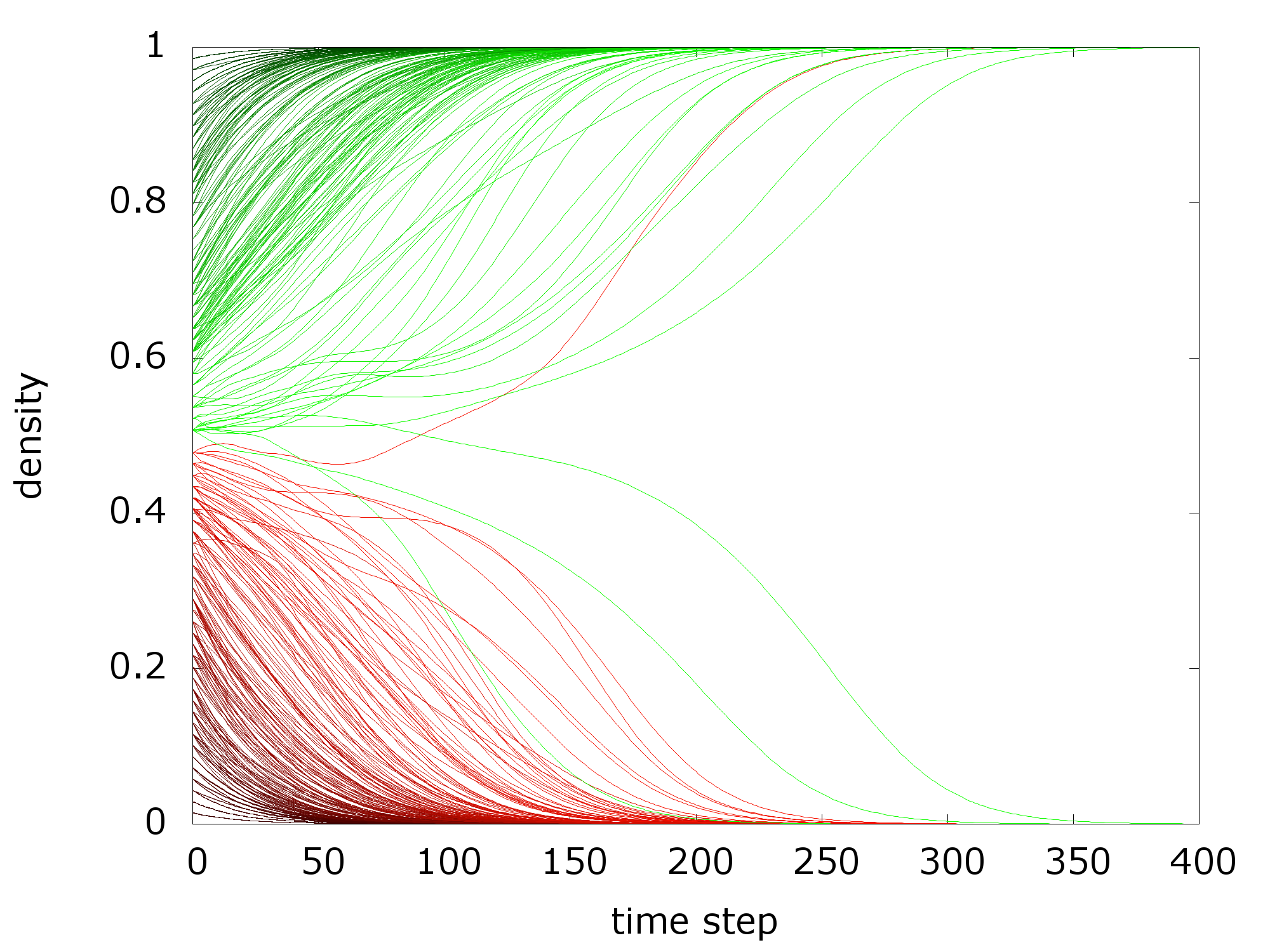}}\\
\subfloat[$\eta=0.01$]{\includegraphics[width=0.48\textwidth]{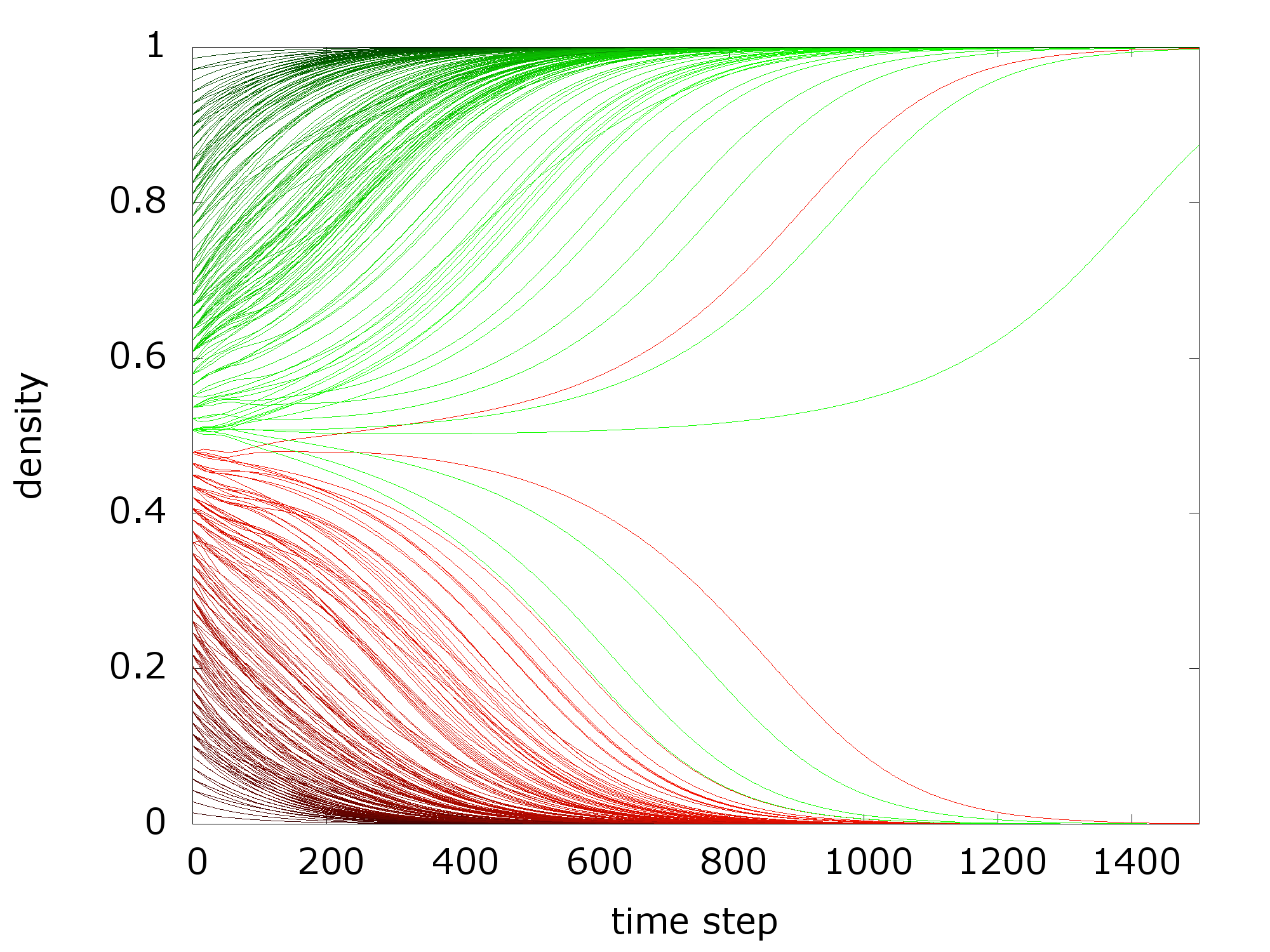}}
\subfloat[$\eta=0.005$]{\includegraphics[width=0.48\textwidth]{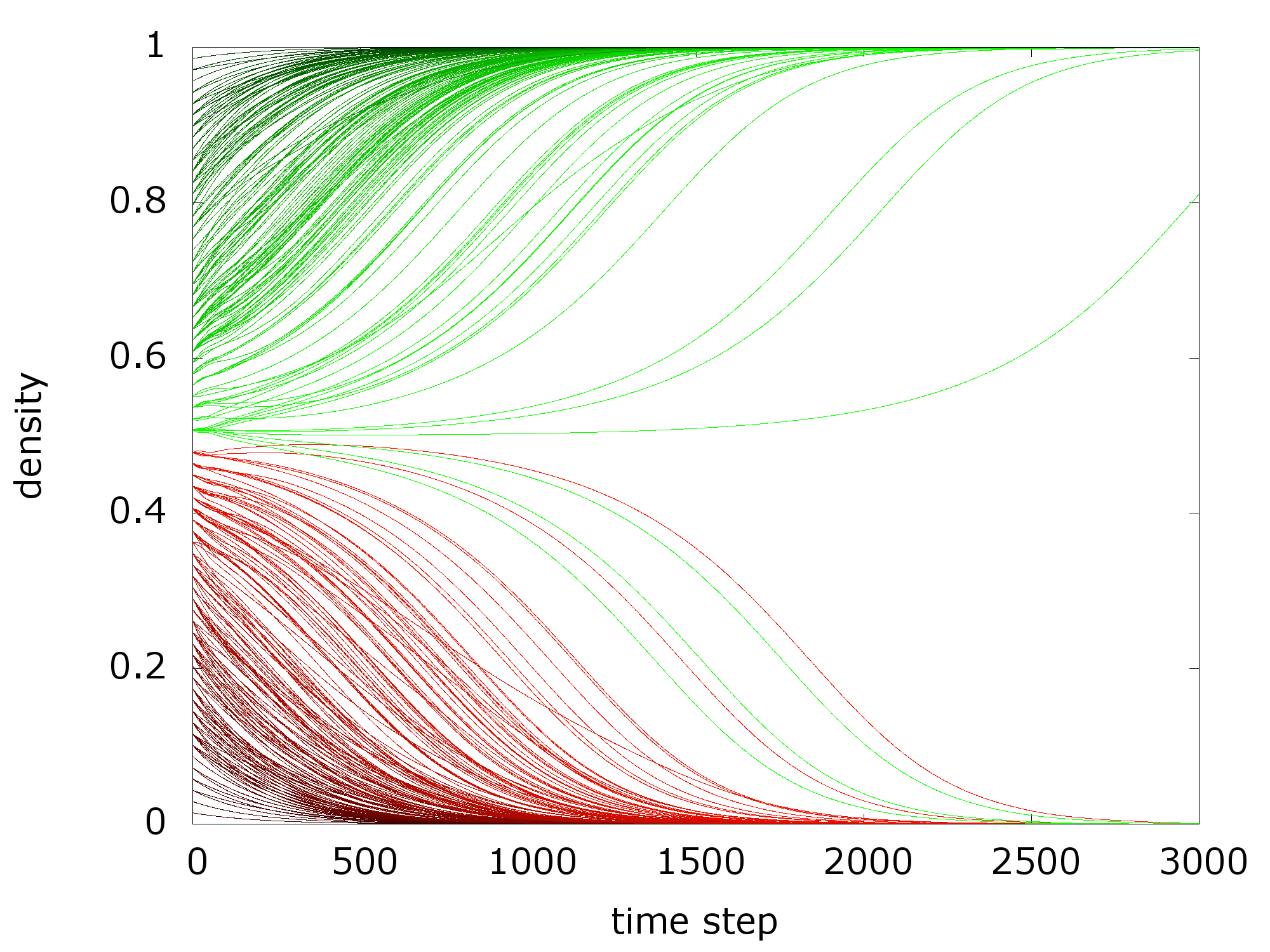}} \\
\subfloat[$\eta=0.001$]{\includegraphics[width=0.48\textwidth]{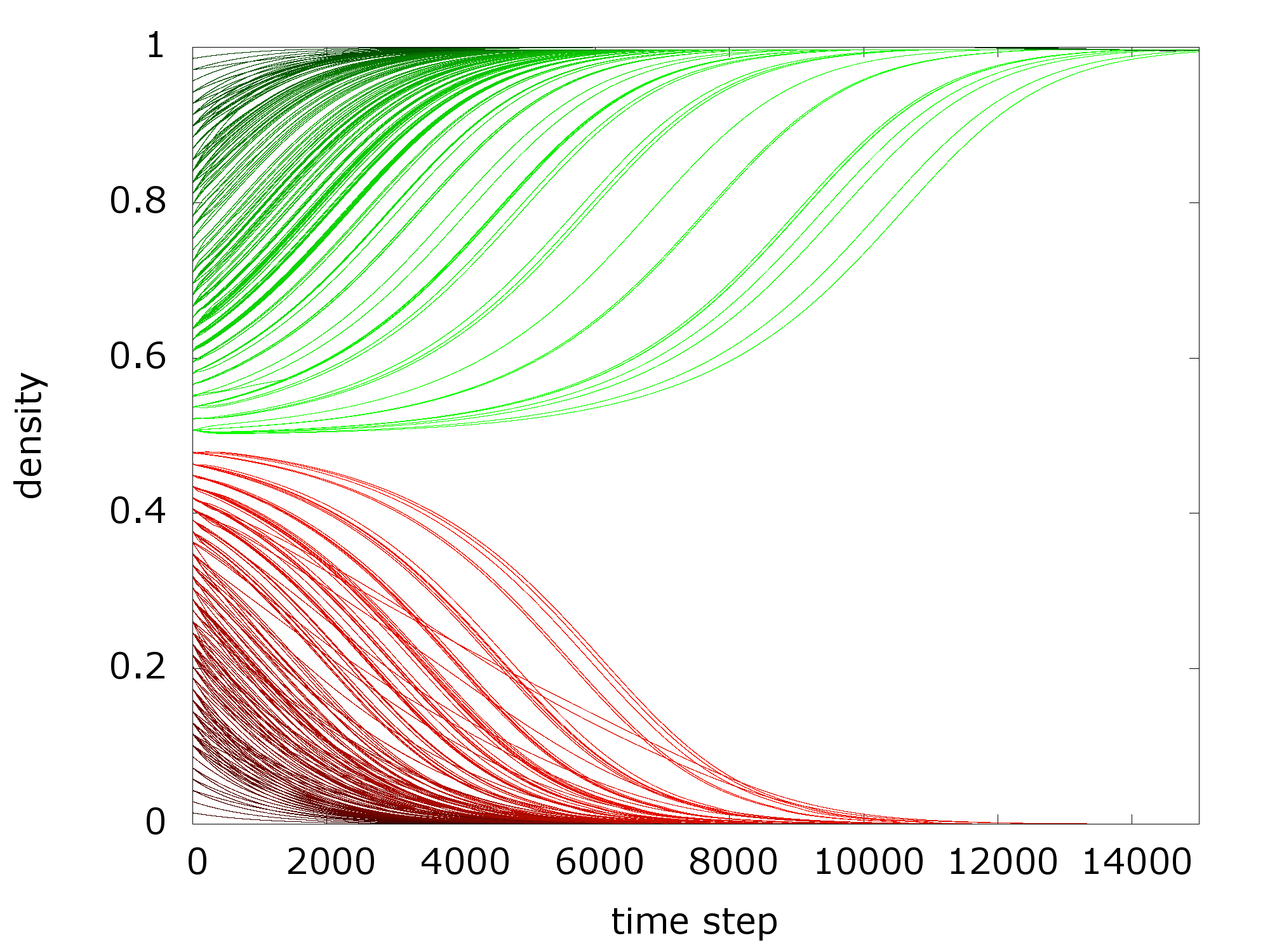}}
\caption{Density evolution over time for the CCA representation of SCA $\mathbf{C}_3$ for 200 initial conditions with density greater than 0.5 (green) and 200 other initial conditions with density smaller than 0.5 (red) and different values $\eta$.}\label{pic:c3-conv-time2}
\end{figure}

\subsection{Totalistic SCAs}
\hl{In this section, we analyze a specific class of totalistic SCAs. Although most likely being of no practical use, it holds many similarities with more complex CAs and SCAs that are often resorted to in modelling. Totalistic CAs are generally known for their ability to mimic real-world phenomena \cite{RevModPhys.55.601,wolfram-class,total}. Given their importance and straightforward formulation, we have chosen this class to serve as an example for our analysis. The focus is on the illustration of the tools introduced in this paper, rather than to unveil new properties of some complex CA models.}

We consider the class of 1D, binary SCAs with a unit neighborhood radius, that satisfy the following conditions:
\begin{align*}
\mathbb{P}\big(s(c_i,t+1) = e_2 \mid &\ s(c_{i-1}, c_i, c_{i+1}, t) = (e_2,e_1,e_1)\big) = \\
\mathbb{P}\big(s(c_i,t+1) = e_2 \mid &\ s(c_{i-1}, c_i, c_{i+1}, t) = (e_1,e_2,e_1)\big) = \\
\mathbb{P}\big(s(c_i,t+1) = e_2 \mid &\ s(c_{i-1}, c_i, c_{i+1}, t) = (e_1,e_1,e_2)\big) = p_1,
\end{align*}
\begin{align*}
\mathbb{P}\big(s(c_i,t+1) = e_2 \mid &\ s(c_{i-1}, c_i, c_{i+1}, t) = (e_1,e_2,e_2)\big) = \\
\mathbb{P}\big(s(c_i,t+1) = e_2 \mid &\ s(c_{i-1}, c_i, c_{i+1}, t) = (e_2,e_1,e_2)\big) = \\
\mathbb{P}\big(s(c_i,t+1) = e_2 \mid &\ s(c_{i-1}, c_i, c_{i+1}, t) = (e_2,e_2,e_1)\big) = p_2,
\end{align*}
where $e_1$ and $e_2$ are the base vectors of the Euclidean space $\mathbb{R}^2$. 
Such SCAs are commonly referred to as totalistic SCAs. We consider the subclass of such SCAs satisfying:
\begin{align*}\mathbb{P}\big(s(c_i,t+1) = e_2 \mid s(c_{i-1}, c_i, c_{i+1}, t) = (e_2,e_2,e_2)\big) = 1,\\
\mathbb{P}\big(s(c_i,t+1) = e_1 \mid s(c_{i-1}, c_i, c_{i+1}, t) = (e_1,e_1,e_1)\big) = 1.
\end{align*}
In this paper, we will refer to this subclass as totalistic SCAs. By applying the decomposition algorithm given by Proposition \ref{prop:stoch-mix-decom}, we can prove the following fact, revealing their structure.

\begin{fact}
Any totalistic SCA can be written as a stochastic mixture of the following ECAs: 128, 150, 232, 254, where up to three of those rules are applied with non-zero probability. 
\end{fact}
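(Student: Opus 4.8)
The plan is to feed the pLUT of an arbitrary totalistic SCA into the decomposition algorithm from the proof of Proposition~\ref{prop:stoch-mix-decom} and to track both which deterministic rules it emits and how many iterations it needs. Throughout, $N=2$ and $R=3$, so the pLUT $P$ has eight columns $p_k=(p_{k1},p_{k2})\in\mathcal{S}_c$, one per neighborhood configuration. The first step is to record the block structure of $P$: grouping the eight configurations by their \emph{count}, i.e.\ the number of coordinates equal to $e_2$, the defining conditions say that $p_k$ depends only on this count, equalling $(1,0)$ for count $0$, $(1-p_1,p_1)$ for count $1$, $(1-p_2,p_2)$ for count $2$, and $(0,1)$ for count $3$. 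I would then check that the four candidate rules are exactly the totalistic deterministic rules fixing this behaviour on counts $0$ and $3$: reading their $\ind$-indexed LUTs shows that $128,150,232,254$ all send count $0$ to $e_1$ and count $3$ to $e_2$, while realizing the four combinations $(e_1,e_1)$, $(e_2,e_1)$, $(e_1,e_2)$, $(e_2,e_2)$ of outputs on counts $1$ and $2$.

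Next I would argue that the algorithm preserves this block structure. Configurations sharing a count start with identical columns; since the update in Eqs.~(\ref{eq:lm-mat}) and~(\ref{eq:pm-mat}) acts on each column through the row of its first maximal entry and subtracts the common scalar $\alpha_m$, equal columns stay equal, so every $P^m$ is constant on counts and every emitted $L^m$ is totalistic. Because the count-$0$ and count-$3$ columns remain of the form $(S_m,0)$ and $(0,S_m)$, with $S_m=1-\sum_{l\le m}\alpha_l$ the common column sum, their maxima sit in rows $e_1$ and $e_2$ respectively, so each $L^m$ reproduces the fixed outputs on counts $0$ and $3$; hence $L^m\in\{128,150,232,254\}$.

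The crux is to bound the number of iterations by three. Here I would restrict attention to the count-$1$ and count-$2$ columns, since the count-$0$ and count-$3$ columns always have maximum equal to the full sum $S_m$ and so never attain the outer minimum defining $\alpha_m=\min_i\max_j P^{m-1}_{ij}$. For an active column $(a,b)$ with $a+b=S_m$ the maximum is $\max(a,b)$, and $\alpha_m$ is the smaller of the two active maxima; subtracting $\alpha_m$ from the maximal coordinate of a minimizing column drives that coordinate to $0$, i.e.\ makes the column \emph{pure}. A pure column stays pure and then has maximum equal to $S_m$, so it can no longer be a strict minimizer. Consequently at least one active column becomes pure after the first step, both are pure after the second, and the third step (all columns pure) subtracts $S_m$ everywhere and returns $\mathbf{0}$. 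Thus $n^\ast\le 3$, so at most three of the four rules carry a positive coefficient, which is the assertion.

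I expect this last paragraph to be the main obstacle: one must verify that purity is genuinely preserved by the update, treat the tie-breaking convention of Eq.~(\ref{eq:lm-mat}) (which always favours $e_1$) uniformly, and confirm that a column pure from the outset (when $p_1$ or $p_2$ lies in $\{0,1\}$) only shortens the process. Once this bookkeeping is settled, the identification of every $L^m$ with one of $128,150,232,254$ together with the bound $n^\ast\le 3$ gives the fact.
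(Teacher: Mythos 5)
Your proposal is correct and takes essentially the same approach as the paper: the paper proves this fact precisely by feeding the totalistic pLUT into the decomposition algorithm of Proposition~\ref{prop:stoch-mix-decom}, which is exactly what you do. Your bookkeeping (columns constant on counts, the count-$0$ and count-$3$ columns pinned to $(S_m,0)$ and $(0,S_m)$ so every $L^m$ lies in $\{128,150,232,254\}$, and at most two purification steps on the active columns followed by one final step, giving $n^\ast\le 3$) correctly supplies the details the paper leaves implicit.
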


We can determine the regions in the $(p_1,p_2)$--plane where a given deterministic CA has the highest probability of application, {\it i.e.} where it is dominant. These regions are given by: 
\begin{itemize}
\item ECA 128 is dominant, if $p_1 < 0.5$ and $p_2 < 0.5$,
\item ECA 150 is dominant, if $p_1 < 0.5$ and $p_2 > 0.5$,
\item ECA 232 is dominant, if $p_1 > 0.5$ and $p_2 < 0.5$,
\item ECA 254 is dominant, if $p_1 > 0.5$ and $p_2 > 0.5$.
\end{itemize}

Note that if $p_1=0.5$ or $p_2=0.5$, there is no unique, dominant rule, therefore we have omitted those cases in the description above.

Let $\alpha_{128}$, $\alpha_{150}$, $\alpha_{232}$, $\alpha_{254}$ denote the probabilities of applying ECAs 128, 150, 232, 254, respectively. Figure \ref{fig:regions} depicts the dependence of $\alpha_{150}$ and $\alpha_{232}$ on $p_1$ and $p_2$. We have omitted the remaining two images for $\alpha_{128}$ and $\alpha_{254}$, since they do not yield additional information.

\begin{figure}
\centering
\subfloat[$\alpha_{150}$]{\includegraphics[width=0.48\textwidth]{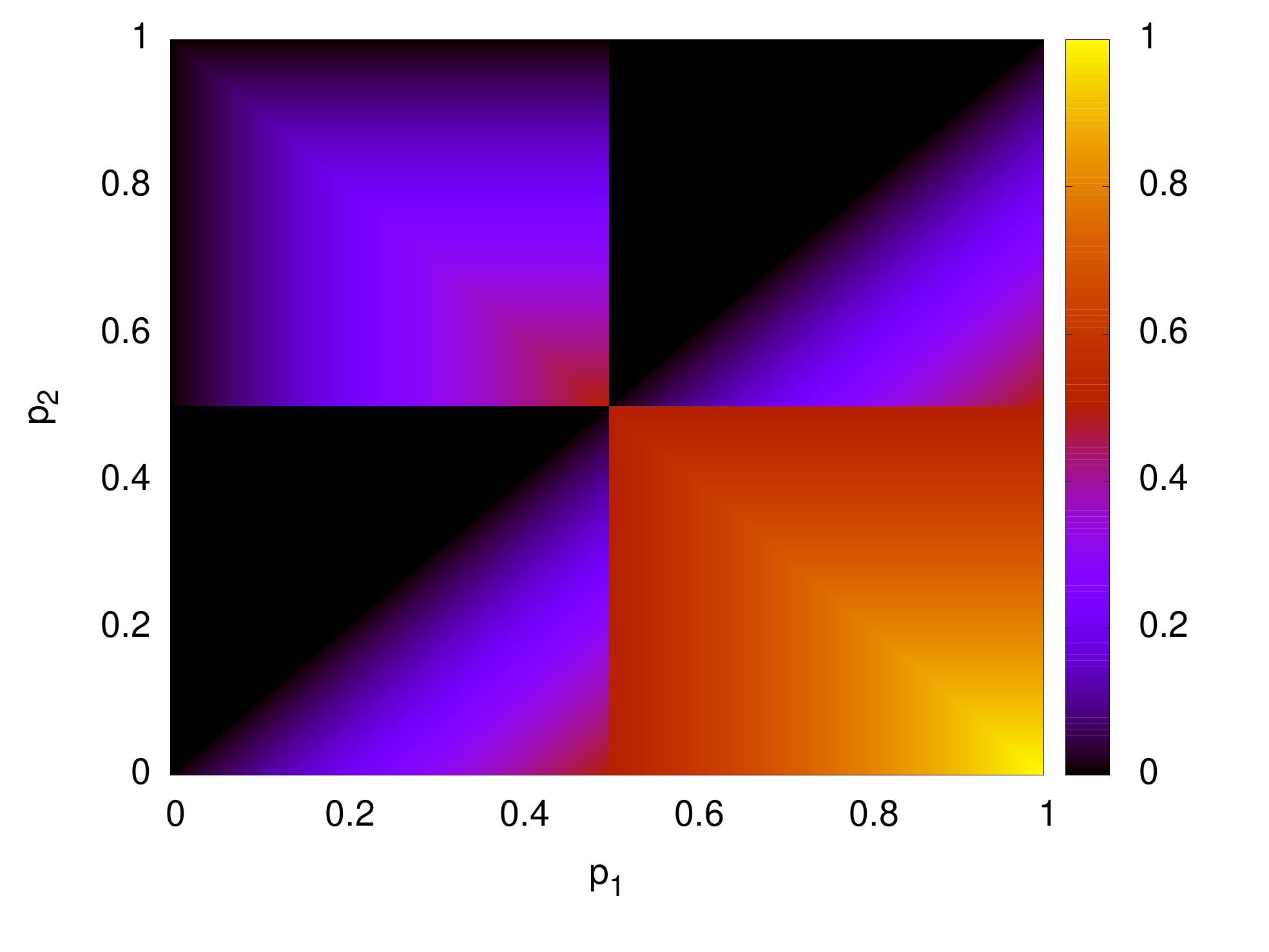}}
\subfloat[$\alpha_{232}$]{\includegraphics[width=0.48\textwidth]{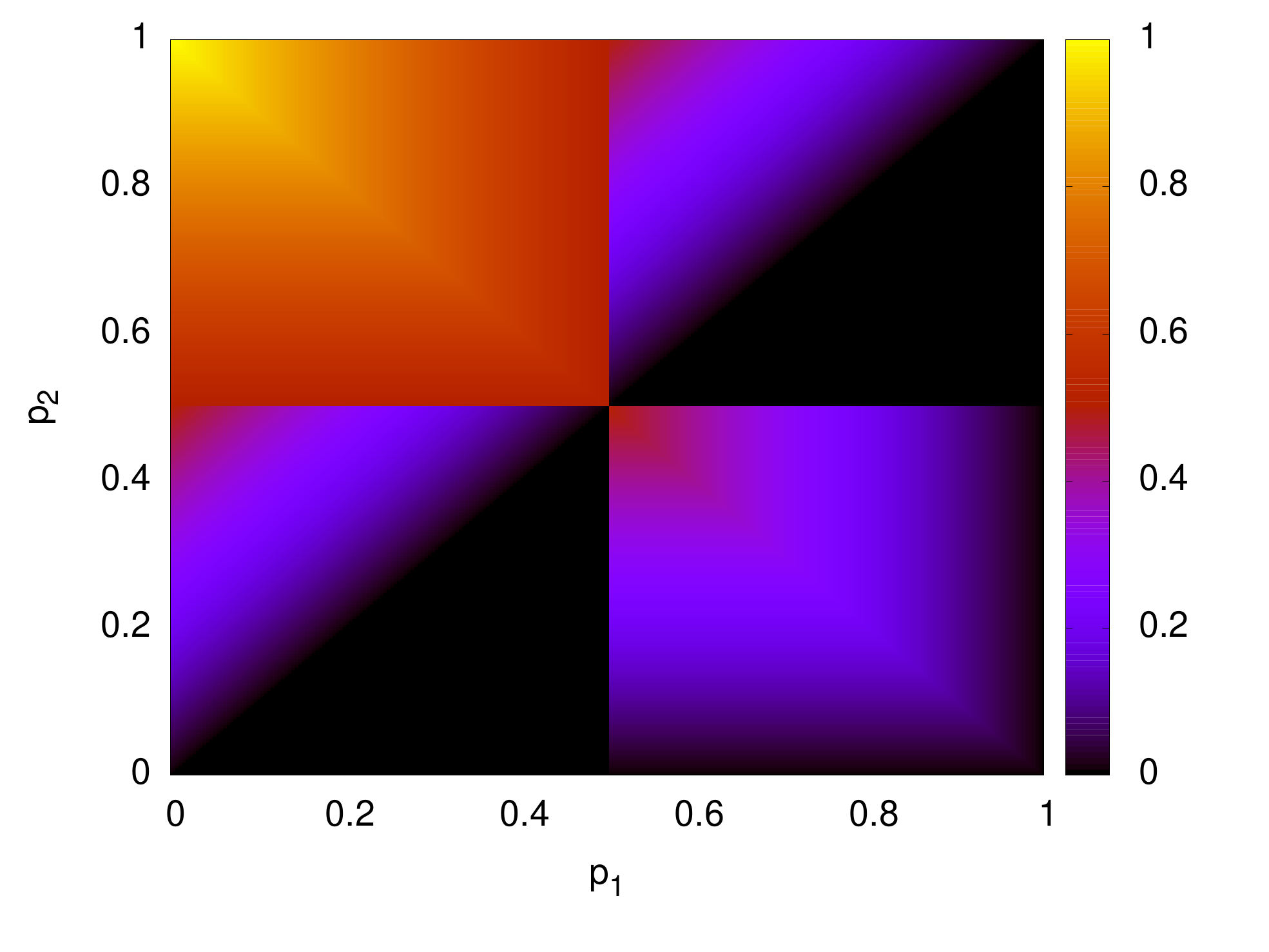}}
\caption{Dependence of the application probabilities: $\alpha_{150}$ of ECA 150 and $\alpha_{232}$ of ECA 232 in the totalistic SCA, on the values of the parameters $p_1$ and $p_2$.}\label{fig:regions}
\end{figure}

The dynamical characteristics of the four ECAs that compose the totalistic SCAs differ significantly. Rules 128 and 254 are simple -- they belong to Class~I according to Wolfram's  classification \cite{wolfram-class}, and their maximum Lyapunov exponents (MLEs) \cite{Bagnoli199234} equals $-\infty$, which means that any changes to their initial configuration do not influence their long-term behavior. Rule 232 belongs to  complexity Class II, and its MLE is positive, but very close to zero. In contrast, rule 150 is a Class III rule, and its MLE is the highest among all ECAs, which highlights the fact that this rule is sensitive to the smallest perturbation of its initial configuration. Since rules 150 and 232 are relatively more complex than rules 128 and 254, we might expect a distinct behavior of the stochastic mixture in regions where the former are dominant.

In order to verify this, we set up an experiment involving a random initial condition of $M=49$ cells which was evolved 100 times for $T=49$ time steps. Such a procedure, for the same initial condition, was repeated for multiple different choices of $(p_1,p_2)$ using a $101\times 101$ regular grid, resulting in a set of space-time diagrams. The set of such space-time diagrams is denoted as $\mathcal{I}(p_1,p_2)$. Let $\Delta(p_1, p_2) = \{ \dist(I,J) \mid I\neq J; I,J\in \mathcal{I}(p_1,p_2) \}$ denote the set of all pair-wise Hamming distances between space-time diagrams, and $\Delta_T(p_1, p_2) = \{ \dist(I[T],J[T]) \mid I\neq J; I,J\in \mathcal{I}(p_1,p_2) \}$ denote the set of all pair-wise Hamming distances between the final configurations in the space-time diagrams. The results for the minimum, mean and maximum of $\Delta(p_1,p_2)$ and $\Delta_T(p_1,p_2)$ are shown in Figure \ref{fig:diff}.

\begin{figure}
\centering
\subfloat[$\max \Delta(p_1,p_2)$]{\includegraphics[width=0.48\textwidth]{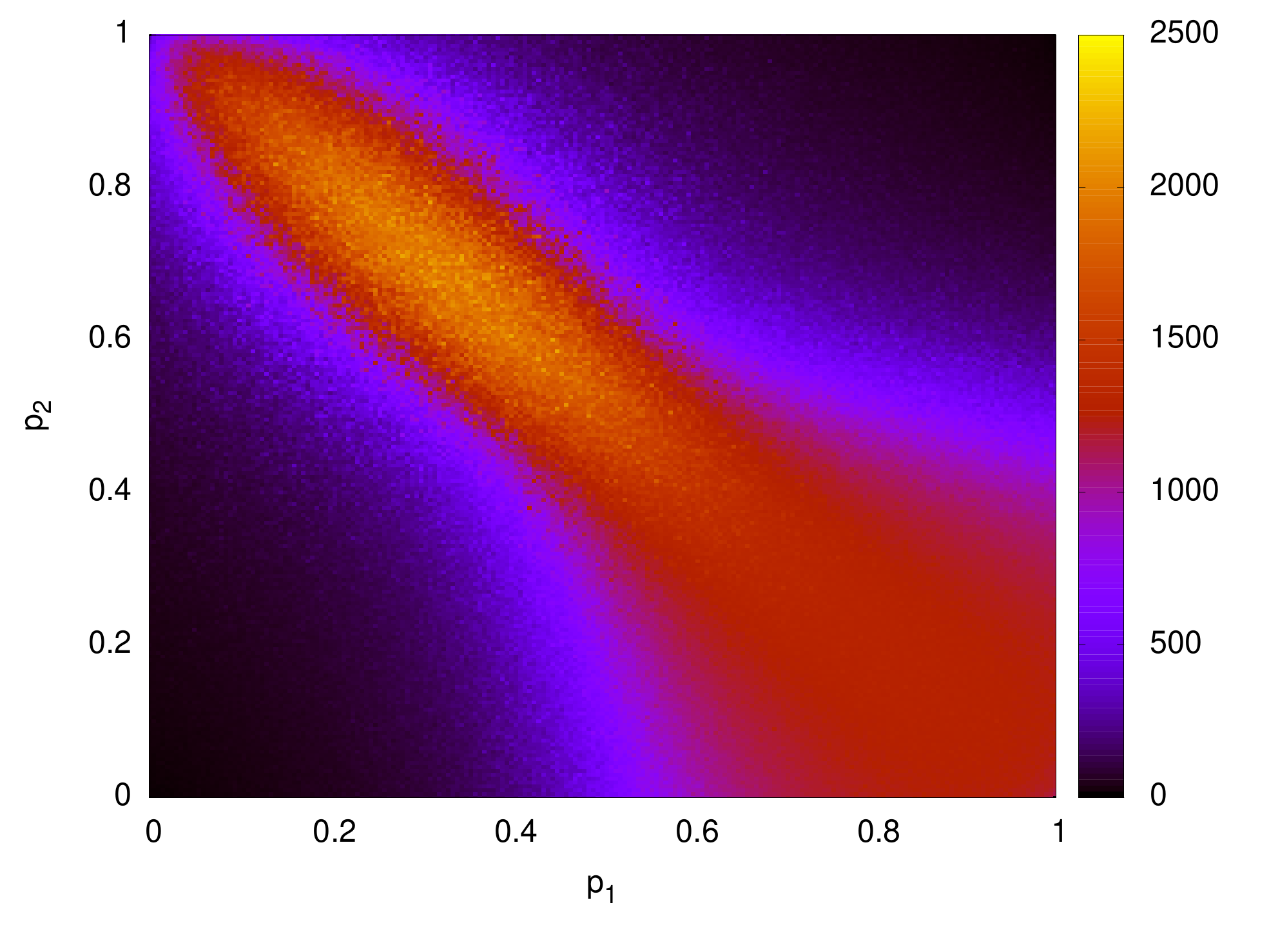}}
\subfloat[$\max \Delta_T(p_1,p_2)$]{\includegraphics[width=0.48\textwidth]{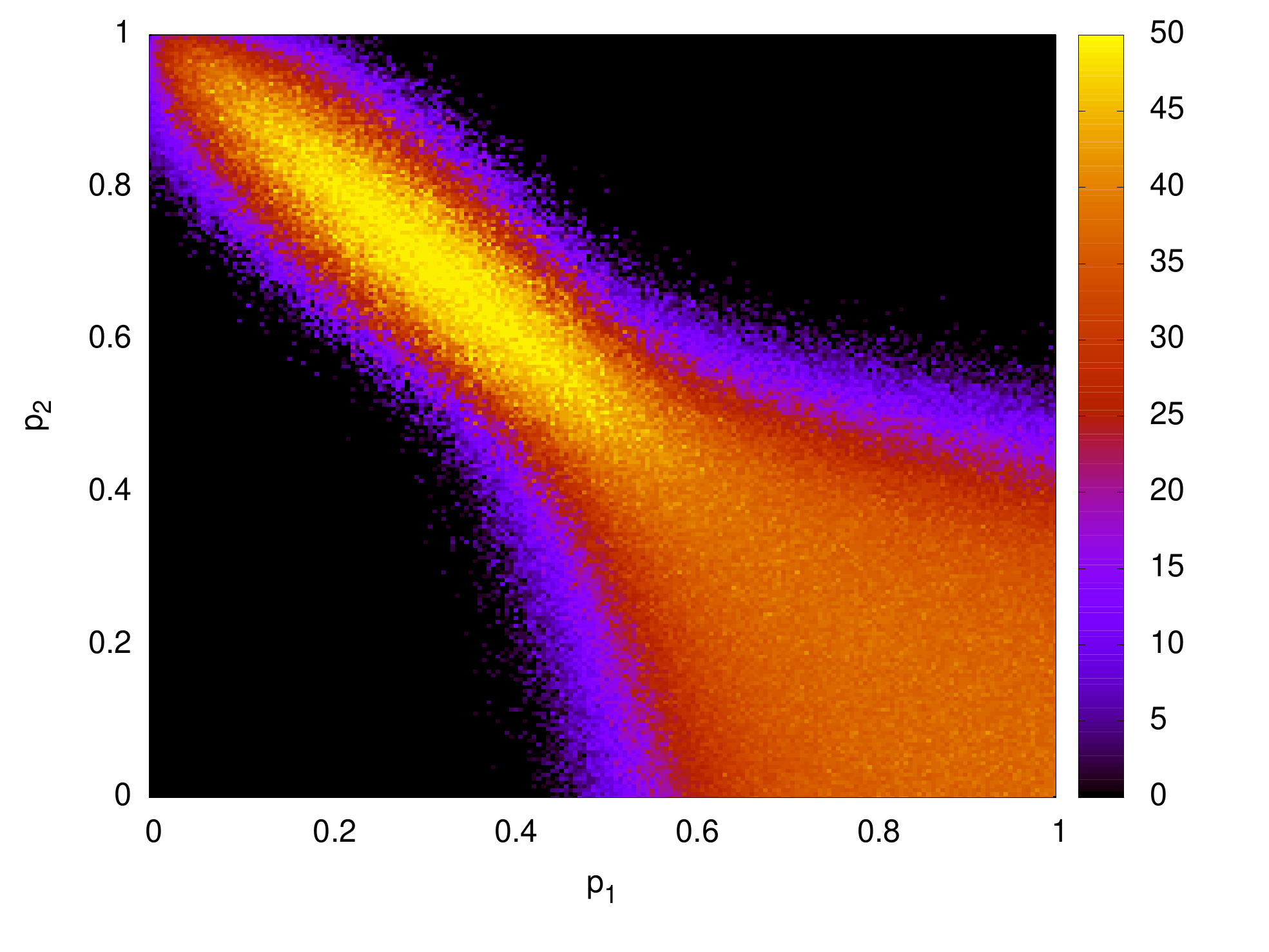}}\\
\subfloat[$\avg \Delta(p_1,p_2)$]{\includegraphics[width=0.48\textwidth]{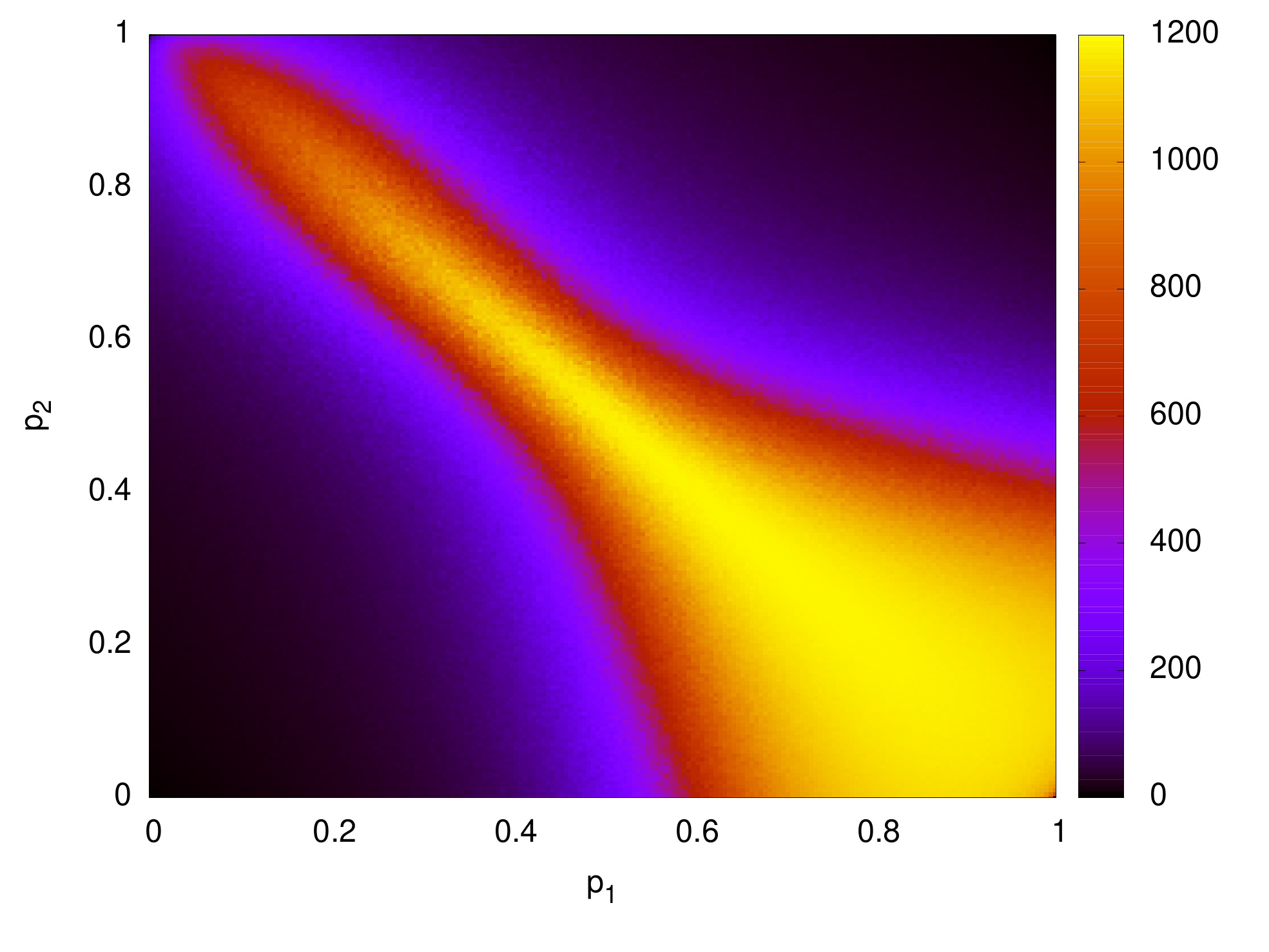}}
\subfloat[$\avg \Delta_T(p_1,p_2)$]{\includegraphics[width=0.48\textwidth]{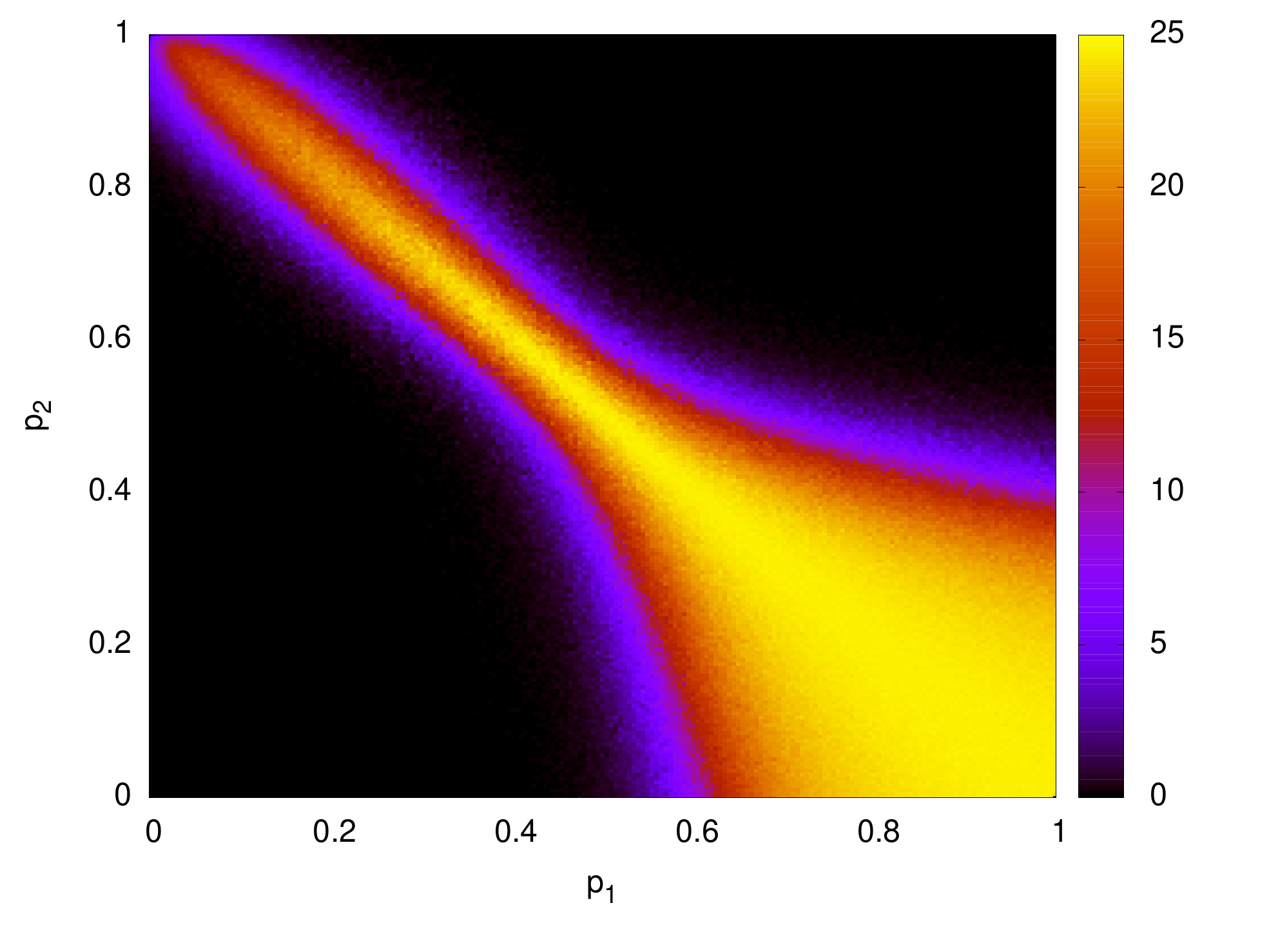}}\\
\subfloat[$\min \Delta(p_1,p_2)$]{\includegraphics[width=0.48\textwidth]{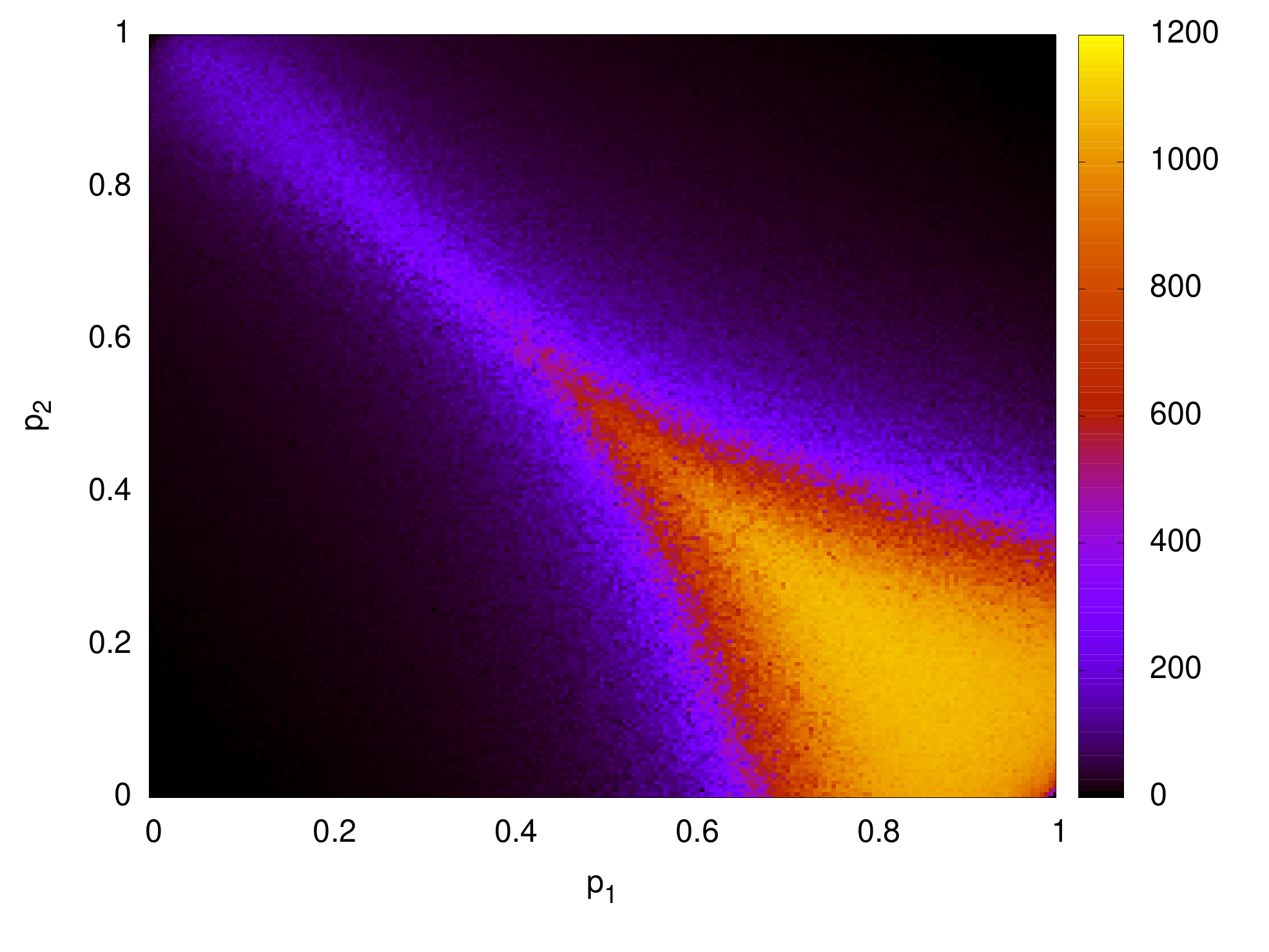}}
\subfloat[$\min \Delta_T(p_1,p_2)$]{\includegraphics[width=0.48\textwidth]{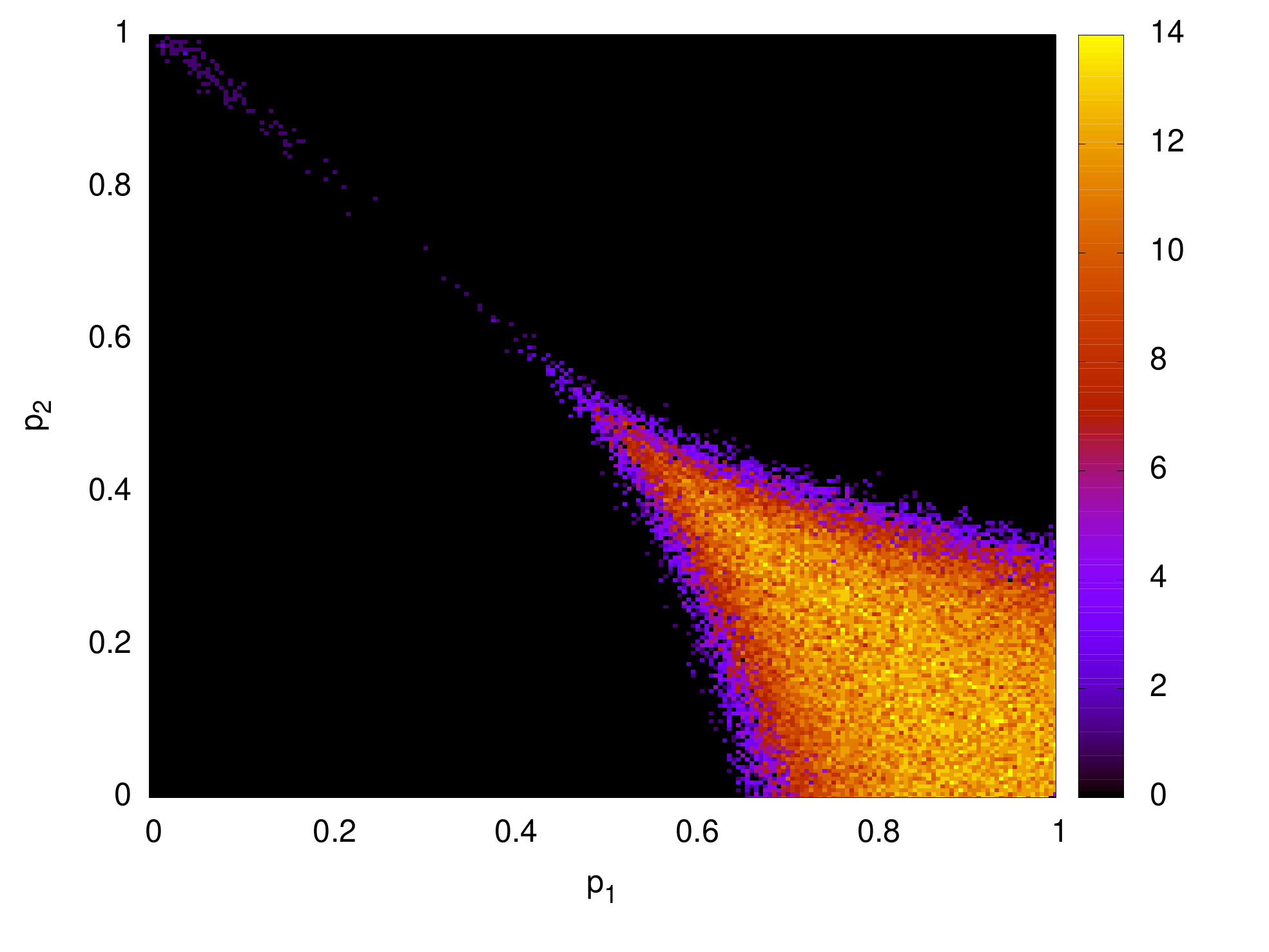}}
\caption{Impact of the probabilities $p_1$ and $p_2$ on the distances $\Delta(p_1,p_2)$ between the space-time diagrams (a, c, e) and the distances $\Delta_T(p_1,p_2)$ between the final configurations at $T=49$ (b, d, f), obtained obtained from the same initial condition.}\label{fig:diff}
\end{figure}

As can be seen from the charts, the distance is especially high in those regions where the application probability $\alpha_{150}$ is high, {\it i.e.}\ where $p_1>0.5$ and $p_2<0.5$, and near the line $p_2 = -p_1$ in the quarter $p_1<0.5$ and $p_2>0.5$. The first region is the one where ECA 150 is dominant, while ECA 232 is dominant in the second one, but with ECA 150 having a substantially high application probability as well. Although the Hamming distance is just a simple indication of complexity, we already see a strong influence of the ECA 150 component. 

\hl{The findings presented above suggest that analyzing the components of a stochastic mixture decomposition of an SCA unveils information on the dynamics of the SCA. In contrast to the $\alpha$-ACAs in Classes IIIa and IIIb ({\it cf.} Table~\ref{tab:ex1-classes}), here we encounter a class of SCA for which uncovering the components of the stochastic mixture with relatively high probability of application gives additional insight into the dynamical properties of the SCA.}

\section{Summary}
\label{sec:summary}
In this paper, two basic, yet important properties of SCAs were discussed. We have shown, both theoretically and practically, that SCAs can be effectively analyzed in the context of deterministic CAs and CCAs. The stochastic mixture representation of SCAs allows to understand the underlying dynamics, while the CCA representation allows to quickly uncover the average behavior of the system. Further research is undertaken to expand the application scope of the presented methods.   

\section*{Acknowledgments}
Witold Bołt is supported by the Foundation for Polish Science under International PhD Projects in Intelligent Computing. This project is financed by the European Union within the Innovative Economy Operational Program 2007--2013 and the European Regional Development Fund.

The authors gratefully acknowledge the financial support of the Research Foundation Flanders (FWO project 3G.0838.12.N).

\section*{Bibliography}
\bibliographystyle{elsarticle-num}
\bibliography{identify}
\end{document}